\newtheorem{theorem}{Theorem}[section]
\newtheorem{proposition}[theorem]{Proposition}
\newtheorem{lemma}[theorem]{Lemma}
\newtheorem{definition}[theorem]{Definition}
\newtheorem{remark}[theorem]{Remark}
\newtheorem{example}[theorem]{Example}
\date{}
\providecommand{\keywords}[1]
{
	\small	
	\textbf{\textit{Keywords---}} #1
}
	\title{Classification of Self-Dual Constacyclic Codes of Prime Power Length $p^s$ Over $\frac{\mathbb{F}_{p^m}[u]}{\left\langle u^3\right\rangle} $}
	\author{ Youssef AHENDOUZ $ ^a $  \thanks{Corresponding author} and Ismail AKHARRAZ $ ^b $
		\\ 
		\small Mathematical and Informatics Engineering Laboratory\\
		\small Ibn Zohr University - Morocco\\
		\small $ ^a $ \texttt{youssef.ahendouz@gmail.com} \qquad$ ^b $ \small \texttt{i.akharraz@uiz.ac.ma }
	}
\begin{document}
	
	\maketitle
\begin{abstract} 
	Let $\mathbb{F}_{p^m}$ be a finite field of cardinality $p^m$, where $p$ is a prime number and $m$ is a positive integer. Self-dual constacyclic codes of length \( p^s \) over \( \frac{\mathbb{F}_{p^m}[u]}{\langle u^3 \rangle} \) exist only when \( p = 2 \). In this work, we classify and enumerate all self-dual cyclic codes of length \( 2^s \) over \( \frac{\mathbb{F}_{2^m}[u]}{\langle u^3 \rangle} \), thereby completing the classification and enumeration of self-dual constacyclic codes of length \( p^s \) over \( \frac{\mathbb{F}_{p^m}[u]}{\langle u^3 \rangle} \). Additionally, we correct and improve results from B. Kim and Y. Lee (2020) in \cite{kim2020classification}.
\end{abstract}

	\keywords{Chain ring, Constacyclic codes, Codes over rings, Self-dual codes }

	\section{Introduction}

The class of constacyclic codes is fundamental in error-correcting code theory, as it generalizes cyclic codes, a family widely studied since the 1950s. Many important codes, such as BCH, Golay, and Hamming codes, are cyclic or derived from them. Constacyclic codes are especially valued for their efficient encoding using shift registers and effective error-correction capabilities, making them highly useful in various engineering fields

Let \( R \) be a finite commutative ring and \( n \) a positive integer. We consider the \( R \)-module \( R^n = \{(a_0, a_1, \ldots, a_{n-1}) \mid a_i \in R \text{ for } i = 0, 1, \ldots, n-1\} \). The \(\lambda\)-cyclic shift operator \( \sigma_\lambda \) on \( R^n \) is defined by \( \sigma_\lambda(a_0, a_1, \ldots, a_{n-1}) = (\lambda a_{n-1}, a_0, a_1, \ldots, a_{n-2}) \). An \( R \)-submodule \( C \subseteq R^n \) is called \(\sigma_\lambda\)-invariant if \( \sigma_\lambda(C) \subseteq C \); such submodules are known as \(\lambda\)-constacyclic codes, and their elements are called codewords. When \(\lambda = 1\), a \(\lambda\)-constacyclic code \( C \) is referred to as a cyclic code. Each codeword \( a = (a_0, a_1, \ldots, a_{n-1}) \) corresponds to a polynomial \( a(x) = a_0 + a_1 x + \cdots + a_{n-1} x^{n-1} \in \frac{R[x]}{\langle x^n - \lambda \rangle} \), where \( \frac{R[x]}{\langle x^n - \lambda \rangle} = \left\{  \sum\limits_{i=0}^{n-1} a_i x^i \mid a_0, a_1, \ldots, a_{n-1} \in R \right\} \), with arithmetic taken modulo \( x^n - \lambda \). Under this identification, the polynomial \( x a(x) \) corresponds to \( \sigma_\lambda(a_0, a_1, \ldots, a_{n-1})  \).  From this, the following proposition is straightforward.
		
		\begin{proposition}\label{vvvvvvv}\cite[Proposition 2.2]{dinh2010constacyclic}
			The \(\lambda\)-constacyclic codes of length \( n\) over \( R \) are exactly the ideals of \( \frac{R[x]}{\langle x^n - \lambda \rangle} \).
		\end{proposition}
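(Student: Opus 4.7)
The plan is to construct an explicit $R$-module isomorphism between $R^n$ and $R[x]/\langle x^n-\lambda\rangle$ under which the constacyclic shift $\sigma_\lambda$ corresponds to multiplication by $x$, and then translate the two defining conditions of a $\lambda$-constacyclic code (being an $R$-submodule and being $\sigma_\lambda$-invariant) into the defining conditions of an ideal.

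First I would fix the map $\phi:R^n\to R[x]/\langle x^n-\lambda\rangle$ sending $(a_0,\dots,a_{n-1})$ to $a_0+a_1x+\cdots+a_{n-1}x^{n-1}$ and verify that it is an $R$-module isomorphism. Both injectivity and surjectivity rest on the fact that $\{1,x,\dots,x^{n-1}\}$ is a free $R$-basis of the quotient: the relation $x^n=\lambda$ lets one reduce any polynomial representative to degree strictly less than $n$, and this reduction is unique because $x^n-\lambda$ is monic of degree $n$.

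Next I would perform the one-line computation
\[
x\cdot\phi(a_0,\dots,a_{n-1})=a_0x+a_1x^2+\cdots+a_{n-1}x^n\equiv\lambda a_{n-1}+a_0x+\cdots+a_{n-2}x^{n-1}\pmod{x^n-\lambda},
\]
which says exactly that $\phi\circ\sigma_\lambda=x\cdot\phi$. This is the crux observation making the identification work; everything else is bookkeeping, and is already pre-announced in the paragraph just before the proposition.

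Finally I would pass between the two notions. If $C\subseteq R^n$ is $\sigma_\lambda$-invariant, then $\phi(C)$ is an $R$-submodule of the quotient closed under multiplication by $x$, hence closed under multiplication by every $x^i$, and therefore, by $R$-linearity, closed under multiplication by any $\sum r_i x^i$, so $\phi(C)$ is an ideal. Conversely, any ideal $I$ is in particular an $R$-submodule stable under multiplication by $x$, so $\phi^{-1}(I)$ is an $R$-submodule of $R^n$ that is $\sigma_\lambda$-invariant. The only mildly delicate point, and the closest thing here to a genuine obstacle, is observing that $R$-linearity together with $x$-invariance already forces ideal-hood in the quotient, because powers of $x$ together with $R$ generate the whole ring $R[x]/\langle x^n-\lambda\rangle$.
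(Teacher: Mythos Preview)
Your argument is correct and is exactly the standard one that the paper has in mind: the paragraph preceding the proposition already identifies codewords with polynomials and notes that $x\cdot a(x)$ corresponds to $\sigma_\lambda(a)$, after which the paper simply declares the proposition ``straightforward'' and cites \cite{dinh2010constacyclic}. Your write-up is just a careful unpacking of that same identification, so there is nothing to add or compare.
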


Constacyclic codes over a finite commutative ring \( R \) are classified into two categories: simple-root constacyclic codes, where the codeword length is coprime to the characteristic of \( R \), and repeated-root constacyclic codes, where they are not coprime. The structure of simple-root cyclic and negacyclic codes over finite chain rings has been studied in \cite{dinh2004cyclic}. However, the classification of repeated-root constacyclic codes is limited to specific cases; the general case remains incomplete.

\par For any integer \( t \geq 2 \), let \( R_t = \frac{\mathbb{F}_{p^m}[u]}{\langle u^t \rangle} = \mathbb{F}_{p^m} + u \mathbb{F}_{p^m} + \cdots + u^{t-1} \mathbb{F}_{p^m} \) (with \( u^t = 0 \)).This ring has the maximal ideal \( u R_t \) and was first introduced in \cite{udaya1998optimal} for constructing sequences in frequency hopping multiple-access (FHMA) spreading spectrum communication systems. Since then, it has garnered significant attention and has been the subject of extensive research, particularly in the study of codes over such rings. For example, see \cite{han2011cyclic}.

When \( t = 2 \), the ring \( R_2 \) is commonly used for constacyclic codes. For example, the structure of \( \mathbb{F}_2 + u\mathbb{F}_2 \) is interesting; it can be considered as algebraically lying between \( \mathbb{F}_4 \) and \( \mathbb{Z}_4 \) in the sense that it is additively analogous to \( \mathbb{F}_4 \) and multiplicatively analogous to \( \mathbb{Z}_4 \). Many studies focus on constacyclic codes of various lengths over \( R_2 \) \cite{dinh2010constacyclic, chen2016constacyclic, dinh2017constacyclic, dinh2018negacyclic, dinh2018cyclic, dinh2019class, dinh2019alpha+, ahendouz122negacyclic}. Furthermore, the structures of all self-dual constacyclic codes of length \( p^s \) over \( R_2 \) are given in \cite{dinh2018self}.

\par\par When \( t = 3 \), the units of \( R_3 \) are of the form \( \lambda = \lambda_0 + u \lambda_1 + u^2 \lambda_2 \), where \( \lambda_0, \lambda_1, \lambda_2 \in \mathbb{F}_{p^m} \) and \( \lambda_0 \neq 0 \). If \( \lambda_1 \neq 0 \), \(\lambda\)-constacyclic codes of length \( p^s \) over \( R_3 \), including self-dual codes, have been studied in \cite{dinh2016repeated}. When \( \lambda_1 = 0 \) and \( \lambda_2 \neq 0 \), Sobhani \cite{sobhani2015complete} determined the structure of \((\delta + \alpha u^2)\)-constacyclic codes of length \( p^s \) over \( R_3 \), where \( \delta, \alpha \in \mathbb{F}_{p^m}^\times \), and provided a complete classification of self-dual \((1 + \alpha u^2)\)-constacyclic codes of length \( 2^s \) over \( R_3 \). In the case \( \lambda_1 = \lambda_2 = 0 \), Laaouine et al. studied the structure of all \( \sigma \)-constacyclic codes of length \( p^s \) over \( R_3 \) in \cite{laaouine2021complete} and classified them into \( 8 \) types. While their classification was insightful, it contained errors that we recently corrected in \cite{ahendouz160note}.  The purpose of this paper is to identify all self-dual \( \lambda_0 \)-constacyclic codes of length \( p^s \) over \( R_3 \). According to \cite[Corollary 4.5]{dinh2017repeated} and \cite{sobhani2015complete}, such codes exist only when \( p = 2 \) and \( \lambda_0 = 1 \).

The structure of the paper is as follows: In Section 2, we present some basic definitions, notations, and previous results. Section 3 focuses on the classification of all self-dual cyclic codes of length \( 2^s \) over \( \frac{\mathbb{F}_{2^m}[u]}{\langle u^3 \rangle} \) and their enumeration, improving upon the results obtained in \cite{kim2020classification}.

\section{Preliminaries}

In this section, we present the essential preliminary results on cyclic codes of length \( 2^s \) over the ring \( \frac{\mathbb{F}_{2^m}[u]}{\langle u^3 \rangle} \), where \( m \) and \( s \) are positive integers.  Hereafter, to simplify notation, we denote:
\[
R_3 := \frac{\mathbb{F}_{2^m}[u]}{\left\langle u^3 \right\rangle} = \mathbb{F}_{2^m} + u\mathbb{F}_{2^m} + u^2\mathbb{F}_{2^m}, \quad
\mathcal{R} := \frac{R_3[x]}{\left\langle x^{2^s} + 1 \right\rangle}, \quad \text{and} \quad \mathcal{K} := \frac{\mathbb{F}_{2^m}[x]}{\left\langle x^{2^s} + 1 \right\rangle}.
\]

By Proposition \ref{vvvvvvv}, cyclic codes of length \( 2^s \) over \( R_3 \) correspond to the ideals of \( \mathcal{R} \).  On the other hand, it is well-known that any polynomial \( h(x) \in \mathcal{K} \) can be uniquely expressed as \( h(x) = \sum\limits_{j=0}^{2^s-1} h_j (x+1)^j \), where \( h_j \in \mathbb{F}_{2^m} \). Moreover,  \( h(x) \) is a unit if and only if \( h_0 \neq 0 \). As a consequence, \( \mathcal{K} \) is a finite chain ring with the maximal ideal \( \langle x + 1 \rangle \) and nilpotency index \( 2^s \). This means that its ideals are given by:
\[
\mathcal{K} = \left\langle (x+1)^0 \right\rangle \supsetneq \left\langle (x+1)^1 \right\rangle \supsetneq \cdots \supsetneq \left\langle (x+1)^{2^s-1} \right\rangle \supsetneq \left\langle (x+1)^{2^s} \right \rangle = \{0\}.
\]

Next, we consider the surjective ring homomorphism \( \mu: \mathcal{R} \to \mathcal{K} \), defined by:
\[
\mu(f(x)) = f(x) \bmod u.
\]

\begin{definition}
	Let \( C \) be an ideal of \( \mathcal{R} \). The \( i \)-th torsion code of \( C \), denoted by \( \operatorname{Tor}_i(C) \), is defined as:
	\[
	\operatorname{Tor}_i(C) = \mu\left(\left\{f(x) \in \mathcal{R} \mid u^i f(x) \in C\right\}\right),
	\]
	where \( 0 \leq i \leq 2 \).
\end{definition}

It is clear that \( \operatorname{Tor}_i(C) \) is an ideal of \( \mathcal{K} \). Specifically, for \( 0 \leq i \leq 2 \), we have:
\[
\operatorname{Tor}_i(C) = \left\langle (x+1)^{T_i(C)} \right\rangle,
\]
for some \( 0 \leq T_i(C) \leq 2^s \).

The following theorem provides a classification of cyclic codes of length \( 2^s \) over \( R_3 \) into eight distinct types.

	\begin{theorem}\cite{laaouine2021complete}\label{thm:cyclic_codes}
		Cyclic codes of length \(2^s\) over \(R_3\) i.e., the ideals of \(\mathcal{R}\) are classified into \(8\) types as follows\begin{itemize}
			\item Type \(1:\) \( \langle 0\rangle, \langle 1\rangle\). For this type, we have
			\[
			\begin{aligned}
				& T_0(\langle 0\rangle) = T_1(\langle 0\rangle) = T_2(\langle 0\rangle) = 2^s,\\
				& T_0(\langle 1\rangle) = T_1(\langle 1\rangle) = T_2(\langle 1\rangle) = 0.
			\end{aligned}
			\]
			
			\item Type \(2:\) \( C_2 = \left\langle u^2(x+1)^\tau\right\rangle \), where \(0 \leq \tau \leq 2^s - 1\). For this type,
			\[
			T_0(C_2) = T_1(C_2) = 2^s, \quad T_2(C_2) = \tau.
			\]
			
			\item Type \(3:\) \( C_3 = \left\langle u(x+1)^\delta + u^2(x+1)^t h(x)\right\rangle \), where \(0 \leq L \leq \delta \leq 2^s - 1\), \(0 \leq t < L\), and \(h(x)\) is either \(0\) or a unit in \(\mathcal{K}\), with 
			\begin{equation}\label{L}
				L = \min\left\{k \mid u^2(x+1)^{k} \in C_3\right\} =
				\begin{cases}
					\delta, & \text{if } h(x) = 0, \\ 
					\min\left\{\delta, 2^s - \delta + t\right\}, & \text{if } h(x) \neq 0.
				\end{cases}
			\end{equation}
			For this type, we have
			\[
			T_0(C_3) = 2^s, \quad T_1(C_3) = \delta, \quad T_2(C_3) = L.
			\]
			
			\item Type \(4:\) \( C_4 = \left\langle u(x+1)^\delta + u^2(x+1)^t h(x), u^2(x+1)^\omega\right\rangle \),
			where \(0 \leq \omega < L \leq \delta \leq 2^s - 1\), \(0 \leq t < \omega\), either \(h(x)\) is \(0\) or \(h(x)\) is a unit in \(\mathcal{K}\), and \(L\) defined as in \eqref{L}. For this type, we have
			\[
			T_0(C_4) = 2^s, \quad T_1(C_4) = \delta, \quad T_2(C_4) = \omega.
			\]
			
			\item Type \(5:\) 
			\[
			C_5 = \left\langle (x+1)^a + u(x+1)^{t_1} h_1(x) + u^2(x+1)^{t_2} h_2(x) \right\rangle,
			\]
			where \(1 \leq a \leq 2^s - 1\), \(0 \leq t_1 < U\), \(0 \leq t_2 < V\), with 
			\begin{equation}\label{UU}
				U = \min\left\{ k \mid u(x+1)^{k} + u^2 g(x) \in C_5 \right\} = 
				\begin{cases} 
					a, & \text{if } h_1(x) = 0, \\ 
					\min\left\{ a, 2^s - a + t_1 \right\}, & \text{if } h_1(x) \neq 0,
				\end{cases}
			\end{equation}
			and 
			\begin{equation}\label{VV}
				V = \min\left\{ k \mid u^2(x+1)^{k} \in C_5 \right\}.
			\end{equation}
			
			For this type, we have
			\begin{equation}\label{jdjdj}
				T_0(C_5) = a, \quad T_1(C_5) = U, \quad T_2(C_5) = V.
			\end{equation}
			
			\item Type \(6:\) 
			\[
			C_6 = \left\langle (x+1)^a + u(x+1)^{t_1} h_1(x) + u^2(x+1)^{t_2} h_2(x), u^2(x+1)^c\right\rangle,
			\]
			where \(0 \leq c < V \leq a \leq 2^s - 1\), \(0 \leq t_1 < U\), \(0 \leq t_2 < c\), and for \(i = 1, 2\), \(h_i(x)\) is either \(0\) or a unit in \(\mathcal{K}\). Here, \(U\) and \(V\) are defined as in \eqref{UU} and \eqref{VV}. For this type, we have 
			\[
			T_0(C_6) = a, \quad T_1(C_6) = U, \quad T_2(C_6) = c.
			\]
			
			\item Type \(7:\) 
			\[
			C_7 = \left\langle (x+1)^a + u(x+1)^{t_1} h_1(x) + u^2(x+1)^{t_2} h_2(x), u(x+1)^b + u^2(x+1)^{t_3} h_3(x)\right\rangle,
			\]
			where \(0 \leq b < U \leq a \leq 2^s - 1\), \(0 \leq t_1 < b\), \(0 \leq t_2 < W\), \(0 \leq t_3 < W\), and for \(i = 1, 2, 3\), \(h_i(x)\) is either \(0\) or a unit in \(\mathcal{K}\). Here, \(U\) is defined as in \eqref{UU}, and 
			\begin{equation}\label{WW}
				W = \min\left\{ k \mid u^2(x+1)^{k} \in C_7 \right\}.
			\end{equation}
			For this type, we have 
			\[
			T_0(C_7) = a, \quad T_1(C_7) = b, \quad T_2(C_7) = W.
			\]
			
			\item Type \(8:\) 
			\[
			C_8 = \left\langle (x+1)^a + u(x+1)^{t_1} h_1(x) + u^2(x+1)^{t_2} h_2(x), u(x+1)^b + u^2(x+1)^{t_3} h_3(x), u^2(x+1)^c \right\rangle,
			\]
			where \(0 \leq c < W < U \leq a \leq 2^s - 1\), \(0 \leq t_1 < U\), \(0 \leq t_2 < W\), \(0 \leq t_3 < W\), and for \(i = 1, 2, 3\), \(h_i(x)\) is either \(0\) or a unit in \(\mathcal{K}\). Here, \(U\) and \(W\) are defined as in \eqref{UU} and \eqref{WW}. For this type, we have 
			\[
			T_0(C_8) = a, \quad T_1(C_8) = U, \quad T_2(C_8) = W.
			\]
		\end{itemize}
	\end{theorem}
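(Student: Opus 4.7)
The plan is to exploit the three-term filtration $\{0\} \subseteq u^2 \mathcal{R} \subseteq u \mathcal{R} \subseteq \mathcal{R}$ together with the fact that $\mathcal{K} \cong \mathcal{R}/u\mathcal{R}$ is a finite chain ring with maximal ideal $\langle x+1 \rangle$ and nilpotency index $2^s$. For any ideal $C$ of $\mathcal{R}$ the torsion codes $\operatorname{Tor}_0(C) \subseteq \operatorname{Tor}_1(C) \subseteq \operatorname{Tor}_2(C)$ are ideals of $\mathcal{K}$, so they take the form $\langle (x+1)^{T_i(C)}\rangle$ with $T_0(C) \geq T_1(C) \geq T_2(C)$. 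I would prove the classification by splitting cases according to which of the three parameters equal $2^s$ (meaning that the corresponding torsion code vanishes), and by successively lifting generators of the torsion codes to generators of $C$ itself.

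When $T_0(C) = 2^s$ we have $C \subseteq u\mathcal{R}$, and a secondary analysis at the $u$- and $u^2$-levels yields Types 1--4, with Type 3 being distinguished from Type 4 by whether the $u$-level generator alone already produces the full $u^2$-level of $C$. When $T_0(C) < 2^s$, choose $f_1 \in C$ lifting a generator of $\operatorname{Tor}_0(C)$, of the shape $(x+1)^a + u\, g_1(x) + u^2 g_2(x)$; since units of $\mathcal{K}$ act transitively on generators of each principal ideal, one may subtract off lower-order contributions to place $g_1, g_2$ in canonical form $(x+1)^{t_1}h_1(x)$ and $(x+1)^{t_2}h_2(x)$ with $t_1 < U$ and $t_2 < V$. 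Depending on whether $f_1$ alone already generates the $u$- and $u^2$-components of $C$ or extra generators of the form $u(x+1)^b + u^2(\cdots)$ and/or $u^2(x+1)^c$ are needed, the ideal falls into Type 5, 6, 7, or 8. The minimum-exponent formulas \eqref{L}, \eqref{UU}, \eqref{VV}, and \eqref{WW} all arise from the identity $(x+1)^{2^s} = x^{2^s} + 1 = 0$ in $\mathcal{K}$: multiplying $(x+1)^a + u(x+1)^{t_1}h_1(x) + \cdots$ by $(x+1)^{2^s - a}$ annihilates its leading term and produces a $u$-level element with $(x+1)$-exponent $2^s - a + t_1$, giving the effective minimum $\min\{a,\, 2^s - a + t_1\}$ when $h_1(x) \neq 0$.

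The main obstacle is the normal-form reduction itself: proving that the parameters $t_1, t_2, t_3$ can always be pushed below the stated bounds, and that no ideal requires a fourth generator beyond those listed in any given type. This requires an induction on the $u$-adic filtration, repeatedly replacing a given lift by one with a strictly smaller $(x+1)$-exponent whenever a bound is violated, and then checking that the resulting generating set still generates the same ideal. Once these canonical forms are secured, verifying the claimed values of $T_0(C_i), T_1(C_i), T_2(C_i)$ for each type is a direct reading of the leading $u$-degrees of their generators. Because the theorem is quoted from \cite{laaouine2021complete}, I would follow their case analysis while remaining alert to the corrections recorded in \cite{ahendouz160note}.
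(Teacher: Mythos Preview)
The paper does not prove this theorem: it is quoted verbatim from \cite{laaouine2021complete} and stated without argument, so there is no in-paper proof to compare your proposal against. Your sketch is a faithful outline of the standard method used in that reference---filtering by the $u$-adic chain, reading off the torsion codes as powers of $(x+1)$ in the chain ring $\mathcal{K}$, lifting their generators, and normalising the lifts by subtracting multiples of higher-$u$-degree generators until the exponents $t_i$ fall below the thresholds $U,V,W,L$. Your explanation of how the formulas \eqref{L} and \eqref{UU} arise from multiplying by $(x+1)^{2^s-a}$ is exactly the mechanism in the original proof. The one point you flag as an obstacle (showing that at most three generators suffice and that the $t_i$ can always be reduced) is handled in \cite{laaouine2021complete} by the same inductive replacement you describe, and you are right to note that the precise values of $V$ and $W$ were later corrected in \cite{ahendouz160note}. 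In short, your plan matches the approach of the cited source; the present paper simply imports the result.
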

		\begin{theorem}\cite[Theorem 2]{ahendouz160note} Let $V$ be as defined in \eqref{VV}, then we have
	
	\[		V =
			\begin{cases}
				a, & \text{if } h_1(x) = h_2(x) = 0, \\
				\min\{ a,2^{s} - a + t_{2} \}, & \text{if } h_1(x) = 0 \text{ and } h_2(x) \neq 0, \\
				\min\{ a,2^{s}-2 a+  2t_{1}\}	 ,& \text{if } h_{1}(x) \neq 0,h_2(x)=0 \text{ and } a \leq 2^s - a + t_{1} ,\\
				\min \{a,2^s- a+  t_{2},2^{s}-2 a+  2t_{1}\} ,& \text{if } h_{1}(x)\neq0, h_2(x)\neq 0, a \leq 2^s - a + t_{1} \text{ and }  2t_{1}\neq a  +  t_{2},\\
				\min\{ a,2^s- a+  t_{2}+\alpha_1\}	,& \text{if } h_{1}(x) \neq 0,h_2(x)\neq 0, a \leq 2^s - a + t_{1}  \text{ and }  2t_{1}= a  +  t_{2},
				\\
				t_1,& \text{if } h_1(x)\neq 0, h_2(x)=0\text{ and } a \geq 2^s - a + t_{1},\\
				\min \{t_{1},a+  t_{2}-  t_{1} \},& \text{if } h_1(x)\neq0, h_2(x)\neq 0,a \geq 2^s - a + t_{1}  \text{ and }  2t_{1}\neq a  +  t_{2},\\ \min\left\{ 2^s+t_1-a,	t_{1}+\alpha_1\right\}
				,&  \text{if } h_1(x)\neq0, h_2(x)\neq 0,a \geq 2^s - a + t_{1}  \text{ and }  2t_{1}= a  +  t_{2},
			\end{cases}\]
		where 
	\[	\alpha_1=\max\left\{0\leq k\leq 2^s\mid  (x+1)^{k} \text{ divides } 	h_1(x) - h_2(x) h_1(x)^{-1} \right\}.\]

	\end{theorem}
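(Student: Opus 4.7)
The plan is to characterise $V$ by directly solving the membership condition $u^2(x+1)^V\in C_5=\langle g(x)\rangle$, where $g(x)=(x+1)^a+u(x+1)^{t_1}h_1(x)+u^2(x+1)^{t_2}h_2(x)$. Writing an arbitrary element of $\mathcal{R}$ as $f(x)=f_0(x)+uf_1(x)+u^2f_2(x)$ with $f_i\in\mathcal{K}$ and comparing coefficients of $u^0,u^1,u^2$ in $f(x)g(x)=u^2(x+1)^V$, one obtains three identities in $\mathcal{K}$:
\begin{align*}
f_0(x+1)^a&=0,\\
f_0(x+1)^{t_1}h_1+f_1(x+1)^a&=0,\\
f_0(x+1)^{t_2}h_2+f_1(x+1)^{t_1}h_1+f_2(x+1)^a&=(x+1)^V.
\end{align*}
Since $\mathcal{K}$ is a chain ring with uniformiser $(x+1)$ of nilpotency $2^s$, the first equation forces $f_0=(x+1)^{2^s-a}\phi_0$ for some $\phi_0\in\mathcal{K}$.

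Solving the second equation splits the argument into the two regimes $a\leq 2^s-a+t_1$ and $a\geq 2^s-a+t_1$. In the first regime one gets $f_1=(x+1)^{2^s-2a+t_1}\phi_0 h_1+(x+1)^{2^s-a}\phi_1$ with $\phi_1\in\mathcal{K}$ free, and the third equation becomes
\[
\phi_0\bigl[(x+1)^{2^s-a+t_2}h_2+(x+1)^{2^s-2a+2t_1}h_1^2\bigr]+(x+1)^{2^s-a+t_1}\phi_1 h_1+f_2(x+1)^a=(x+1)^V.
\]
In the second regime, divisibility forces $\phi_0=(x+1)^{2a-2^s-t_1}\psi_0$, after which $f_1=\psi_0 h_1+(x+1)^{2^s-a}\phi_1$ and the bracket shifts to $(x+1)^{a-t_1+t_2}h_2+(x+1)^{t_1}h_1^2$ multiplied by the free parameter $\psi_0$. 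In either form, because $h_1$ is a unit and $\phi_1,f_2$ range freely, the last two terms can realise any element of $\langle(x+1)^{\min(a,\,2^s-a+t_1)}\rangle$; thus $V$ is the minimum of that threshold and the $(x+1)$-adic valuation of the bracketed factor, taken at a unit choice of the remaining free parameter. The degenerate cases $h_1=0$ or $h_2=0$ are handled by dropping the corresponding bracket summand.

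The main obstacle is the resonance case $2t_1=a+t_2$, in which the two summands of the bracket share the same power of $(x+1)$ and can cancel. In characteristic $2$ the bracket then equals $(x+1)^{\ast}h_1\bigl(h_1-h_2 h_1^{-1}\bigr)$, whose $(x+1)$-adic valuation is $\ast+\alpha_1$ by the definition of $\alpha_1$, where $\ast=2^s-a+t_2$ in the first regime and $\ast=t_1$ in the second. Away from resonance, the bracket has valuation equal to the minimum of its two individual exponents. Collecting these evaluations within each regime, together with the simplification $\min\{t_1,\,2^s-a+t_1\}=t_1$ (valid because $a\leq 2^s$) in the second regime, yields the eight rows of the statement.
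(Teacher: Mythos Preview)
The paper does not actually prove this statement; it is quoted verbatim from \cite{ahendouz160note} and no argument is supplied here, so there is nothing in the present paper to compare your attempt against.

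That said, your approach is the natural one and is essentially correct: reduce the condition $u^2(x+1)^k\in\langle g\rangle$ to a triangular system in the chain ring $\mathcal{K}$, solve layer by layer using the $(x+1)$-adic valuation, and identify $\mathrm{Tor}_2(C_5)$ as the ideal generated by the achievable right-hand sides of the third equation. Two places deserve a sentence more of care. First, when you say ``the last two terms can realise any element of $\langle(x+1)^{\min(a,\,2^s-a+t_1)}\rangle$'' you are using that $h_1$ is a unit; in the $h_1=0$ rows the $\phi_1$-contribution vanishes and the threshold is simply $a$, which you acknowledge only in passing. Second, the phrase ``at a unit choice of the remaining free parameter'' is doing real work: the point is that the set of achievable right-hand sides is an \emph{ideal} of $\mathcal{K}$ (namely $\mathrm{Tor}_2(C_5)$), so its smallest valuation is $\min\{v(B),\text{threshold}\}$ with $B$ your bracketed factor; stating this once makes every ``hence $V=\min(\cdots)$'' step rigorous. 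The resonance computation $h_2+h_1^2=h_1\bigl(h_1-h_2h_1^{-1}\bigr)$ in characteristic $2$ and the simplification $\min\{t_1,2^s-a+t_1\}=t_1$ are both correct.
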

\begin{theorem}\cite[Theorem 3]{ahendouz160note}\label{rrfrfr}
	Let $W$ be defined as in \eqref{WW}. Then we have:
	
\begin{eqnarray}\label{zvdeefefefcefcefefc}
		W = \begin{cases}
		b, & \text{if }h_1(x) =  h_{2}(x) = h_{3}(x) = 0, \\
		\min\{ b, a- b+  t_{3}\}, & \text{if } h_1(x) = h_{2}(x)=0 \text{ and } h_{3}(x)\neq 0, \\
		\min\{2^{s}- a+  t_{2}, b\}, & \text{if } h_{2}(x)\neq0 \text{ and }h_1(x) =  h_{3}(x)= 0, \\
		\min\{2^{s}- a+  t_{2}, b, a- b+  t_{3}\}, & \text{if } h_1(x)=0 ,h_{2}(x)\neq 0 \text{ and } h_{3}(x)\neq 0,\\
		t_1, & \text{if } h_1(x)\neq 0 \text{ and } h_2(x)=h_3(x)=0, \\
		\min\{2^{s}- a+  t_{2},t_1 \} ,& \text{if } h_1(x)\neq0, h_2(x)\neq0 \text{ and } h_3(x)=0,\\
		\min\{\beta_3,\beta_4, b,2^{s}- b+  t_{3}\},&\text{if } h_1(x)\neq0 \text{ and }h_3(x)\neq0,
	\end{cases}
\end{eqnarray}	where 
	\[\left.\begin{array}{rcl}
		\beta_3 &=& \begin{cases}
			2^{s}- a + t_{1}- b+ t_{3} ,& \text{if } h_{2}(x)= 0, \\
			\min\{ 2^{s}- a + t_{2}, 2^{s}- a + t_{1}- b+ t_{3} \}, & \text{if } h_{2}(x)\neq0 \text{ and }    t_{2}  \neq \  t_{1}- b+  t_{3}, \\
			2^{s}- a + t_{1}- b+ t_{3} +\alpha_{3}, & \text{if } h_{2}(x)\neq0 \text{ and }   t_{2}  = \  t_{1}- b+  t_{3},
		\end{cases} \\
		\beta_4 &=& \begin{cases}
			\min\{  t_{1}, a- b+  t_{3} \} ,& \text{if }   t_{1}  \neq  a- b+  t_{3}, \\
			t_{1}+\alpha_{4} ,& \text{if }    t_{1}  =  a- b+  t_{3},
		\end{cases}\\ 	\alpha_3 &=& \max \left\{   0 \leq k \leq 2^s \mid (x+1)^k \text{ divides }    h_2(x)- h_{1}(x) h_{3}(x) \right\},\\	\alpha_4 &=& \max \left\{ 0 \leq k \leq 2^s \mid (x+1)^k \text{ divides }   h_1(x)-h_{3}(x) \right\}.
	\end{array}\right.\]

\end{theorem}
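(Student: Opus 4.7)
Since $C_7 = \langle f_1, f_2\rangle$ with $f_1 = (x+1)^a + u(x+1)^{t_1}h_1 + u^2(x+1)^{t_2}h_2$ and $f_2 = u(x+1)^b + u^2(x+1)^{t_3}h_3$, every element of $C_7$ has the form $p_1 f_1 + p_2 f_2$ with $p_1, p_2 \in \mathcal{R}$. The plan is to exhibit an explicit list of ``basic'' combinations that already lie in $C_7$ and have the form $u^2(x+1)^k$ for small $k$ (the upper bound on $W$), and then to show by a coefficient-tracking argument that no other combination can do better (the matching lower bound). The case split in \eqref{zvdeefefefcefcefefc} is dictated by which of $h_1, h_2, h_3$ vanish, since each nonzero $h_i$ activates one more construction.

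For the upper bound I would use four ingredients. First, $uf_2 = u^2(x+1)^b$ gives $W \le b$ universally. Second, if $h_3 \neq 0$, then $(x+1)^{2^s-b}f_2 = u^2 h_3(x)(x+1)^{2^s-b+t_3}$ in $\mathcal{R}$ since $(x+1)^{2^s}=0$ in $\mathcal{K}$; this contributes the bound $2^s-b+t_3$. Third, $uf_1 - (x+1)^{a-b}f_2 = u^2\bigl[(x+1)^{t_1}h_1 - (x+1)^{a-b+t_3}h_3\bigr]$ has $(x+1)$-adic valuation $\min\{t_1,a-b+t_3\}$ when $t_1 \neq a-b+t_3$ and jumps to $t_1+\alpha_4$ when $t_1 = a-b+t_3$ (because the $u^2$-coefficient then collapses to $(x+1)^{t_1}(h_1-h_3)$); this is precisely $\beta_4$. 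Fourth, $(x+1)^{2^s-a}f_1$ is already $u$-divisible, so subtracting $h_1(x+1)^{2^s-a+t_1-b}f_2$ kills its $u^1$-level term and leaves $u^2\bigl[(x+1)^{2^s-a+t_2}h_2 - (x+1)^{2^s-a+t_1-b+t_3}h_1 h_3\bigr]$, whose valuation gives $\beta_3$, with the $\alpha_3$ correction arising exactly when the two exponents coincide and the leading coefficient becomes $h_2 - h_1 h_3$. Intersecting the applicable bounds in each of the seven cases reproduces the right-hand side of \eqref{zvdeefefefcefcefefc}.

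For the lower bound, I would write $p_i = p_{i,0}+up_{i,1}+u^2 p_{i,2}$ with $p_{i,j}\in \mathcal{K}$ and expand $p_1 f_1 + p_2 f_2$ modulo $u^3$. Vanishing of the $u^0$-coefficient forces $(x+1)^{2^s-a}\mid p_{1,0}$, so $p_{1,0} = (x+1)^{2^s-a}\tilde p_{1,0}$. Substituting, vanishing of the $u^1$-coefficient gives the relation $\tilde p_{1,0}(x+1)^{2^s-a+t_1}h_1 + p_{1,1}(x+1)^a + p_{2,0}(x+1)^b = 0$ in $\mathcal{K}$, which, because $\mathcal{K}$ is a chain ring with maximal ideal $\langle x+1\rangle$, pins $p_{1,1}$ and $p_{2,0}$ down to explicit cosets depending on $\tilde p_{1,0}$. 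Plugging these back into the $u^2$-coefficient yields an expression whose minimum $(x+1)$-adic valuation over all admissible $(\tilde p_{1,0}, p_{1,1}, p_{2,0})$ is exactly what the constructions above already realize. The hardest part will be the final case $h_1\neq 0$ and $h_3\neq 0$: there the $u^2$-coefficient is a linear combination of four $(x+1)$-powers weighted by $h_1, h_2, h_3, h_1 h_3$, and one must track every possible resonance between the exponents $a,b,t_1,t_2,t_3$, together with the accidental cancellations controlled by $\alpha_3$ and $\alpha_4$, to rule out any choice that would beat $\min\{\beta_3,\beta_4,b,2^s-b+t_3\}$. This bookkeeping is the main obstacle, but it is a finite case analysis that exactly mirrors the branching already visible in the definitions of $\beta_3$ and $\beta_4$.
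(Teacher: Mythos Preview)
The paper does not actually prove this statement: it is quoted verbatim as Theorem~3 of \cite{ahendouz160note}, and no argument is supplied here. So there is nothing to compare your proposal against in this paper.

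That said, your plan is exactly the standard and correct way to compute $W$ for ideals of this shape, and your four explicit constructions are right. In particular, $uf_2$, $(x+1)^{2^s-b}f_2$, $uf_1-(x+1)^{a-b}f_2$, and $(x+1)^{2^s-a}f_1 - h_1(x+1)^{2^s-a+t_1-b}f_2$ land in $u^2\mathcal{K}$ with the $(x+1)$-adic valuations you claim, and specialising them to each of the seven cases reproduces the asserted upper bounds (the bound $2^s-b+t_3$ from construction~2 is always dominated by $a-b+t_3$ when $h_1=0$, which is why it only survives in the last line). One small point worth recording: construction~4 uses the multiplier $(x+1)^{2^s-a+t_1-b}$, and you should note that the exponent is nonnegative precisely because $b<U\le 2^s-a+t_1$ in Type~7 whenever $h_1\neq 0$.

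The lower bound is where the work lies, and your sketch is honest about this. Writing $p_i=p_{i,0}+up_{i,1}+u^2p_{i,2}$ and killing the $u^0$- and $u^1$-levels does pin $p_{1,0},p_{1,1},p_{2,0}$ down to cosets, and the residual $u^2$-coefficient is then a $\mathcal{K}$-linear combination of the four elements your constructions already exhibit (plus a free $u^2(x+1)^b$ summand coming from the freedom in $p_{2,0}$ modulo $(x+1)^{a-b}$, and a free $u^2(x+1)^a$ summand from $p_{1,1}$). Since $\mathcal{K}$ is a DVR with uniformiser $x+1$, the minimum valuation over all such combinations is the minimum of the valuations of the generators, with the $\alpha_3,\alpha_4$ corrections appearing exactly when two exponents coincide. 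Carrying this out carefully in the $h_1\neq 0,\ h_3\neq 0$ branch is indeed tedious but mechanical, and your plan is complete in outline.
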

	
Given two polynomials \( c(x) = \sum\limits_{j=0}^{2^s-1} c_j x^j \) and \( d(x) = \sum\limits_{j=0}^{2^s-1} d_j x^j \) in \( \mathcal{R} \), the dot product of \( c(x) \) and \( d(x) \) is defined as follows \[
c(x) \cdot d(x) = \sum\limits_{j=0}^{2^s-1} c_j d_j.
\]

With respect to this dot product, the dual of a cyclic code \( C \) of length \( 2^s \) over \( R_3 \) (i.e., an ideal in \( \mathcal{R} \)), denoted \( C^{\perp} \), is given by
\[
C^{\perp} = \left\{ c(x) \in \mathcal{R} \mid c(x) \cdot d(x) = 0 \text{ for all } d(x) \in C \right\}.
\]

A cyclic code \( C \) is said to be self-dual if \( C = C^{\perp} \). It is well known that the dual code \( C^{\perp} \) is also a cyclic code of length \( 2^s \) over \( R_3 \), and hence an ideal of \( \mathcal{R} \).

	\begin{definition}
		Let \( C \) be a cyclic code of length \( 2^s \) over \(  R_3  \).  The annihilator of \( C \), denoted \( \mathcal{A}(C) \), is the set
		\[
		\mathcal{A}(C) = \left\{ c(x) \in \mathcal{R} \mid \forall d(x) \in C, \,c(x) d(x)  = 0 \right\}.
		\]
	\end{definition}
	
	Consider the polynomial \( c(x) = \sum\limits_{j=0}^{2^s-1} c_j x^j \in \mathcal{R} \). The reciprocal of \( c(x) \), denoted \( c^*(x) \), is defined as $ x^lc(x^{-1}) $ where $ l=\deg c(x). $
	\begin{lemma}\cite{sobhani2015complete}\label{ccecececcv}
		Let \(C\) be a cyclic code of length \(2^s\) over \( R_3 \). Then \[C^{\perp} = \mathcal{A}(C)^* :=\{c^*(x)\mid c(x)\in\mathcal{A}(C)\}.\]
	\end{lemma}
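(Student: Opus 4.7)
The plan is to prove $C^\perp = \mathcal{A}(C)^*$ by reformulating the coordinate-wise dot product as a coefficient in a ring product in $\mathcal{R}$ and then exploiting the fact that $C$ is cyclic. The first step is to establish the key identity: for any $c(x)=\sum_{i=0}^{2^s-1} c_i x^i$ and $d(x)=\sum_{j=0}^{2^s-1} d_j x^j$ in $\mathcal{R}$, the dot product $c(x)\cdot d(x)=\sum_i c_i d_i$ equals the coefficient of $x^0$ in the product $c(x)\,d(x^{-1})$ computed in $\mathcal{R}$. Here $x$ is a unit in $\mathcal{R}$ with $x^{-1}=x^{2^s-1}$ since $x^{2^s}=1$ (in characteristic $2$). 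The identity follows by expanding $c(x)d(x^{-1})=\sum_{i,j} c_i d_j x^{i-j}$ and reducing exponents modulo $2^s$; more generally, the coefficient of $x^k$ in $c(x)d(x^{-1})$ equals the dot product of $c$ with the $k$-fold cyclic shift of $d$.

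Next I would use cyclicity. Because $C$ is an ideal of $\mathcal{R}$, if $d\in C$ then $x^k d\in C$ for every $k$, so $c\in C^\perp$ amounts to $c\cdot(x^k d)=0$ for all such $d$ and all $0\le k<2^s$. By the identity above, $c\cdot(x^k d)$ is precisely the coefficient of $x^{-k}$ in $c(x)\,d(x^{-1})$, so as $k$ varies every coefficient of $c(x)\,d(x^{-1})$ is forced to vanish. Hence $c\in C^\perp$ if and only if $c(x)\,d(x^{-1})=0$ in $\mathcal{R}$ for every $d\in C$.

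The final step is to translate this into a statement about annihilators. The map $\rho:\mathcal{R}\to\mathcal{R}$ defined by $\rho(f)(x)=f(x^{-1})$ is a ring automorphism with $\rho^2=\mathrm{id}$, and $\rho(C)$ is again an ideal. The condition above reads $c\in\mathcal{A}(\rho(C))$; applying $\rho$ to the defining equation $c\cdot\rho(d)=0$ and using that $\rho$ is a ring homomorphism yields $\rho(c)\cdot d=0$ for every $d\in C$, i.e., $\rho(c)\in\mathcal{A}(C)$. Hence $C^\perp=\rho(\mathcal{A}(C))$. To match this with $\mathcal{A}(C)^*$, note that for any $f\in\mathcal{R}$ with representative of degree $l<2^s$ one has $f^*(x)=x^l f(x^{-1})=x^l\,\rho(f)(x)$. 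Since $x$ is a unit in $\mathcal{R}$ and $\mathcal{A}(C)$ is an ideal, multiplication by $x^l$ permutes $\mathcal{A}(C)$, so $\{f^*:f\in\mathcal{A}(C)\}=\rho(\mathcal{A}(C))$, which concludes the argument.

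The main obstacle is the last identification of $f^*$ with $\rho(f)$: the reciprocal $f^*$ is defined in a way that depends on the degree of a chosen representative, whereas $\rho$ is purely a ring automorphism, and one must verify that the extra factor $x^l$ is absorbed by the ideal structure of $\mathcal{A}(C)$. The index bookkeeping modulo $2^s$ in the first step is routine but also requires care, since one has to check that cyclic shifts of $d$ correspond exactly to the distinct coefficients of $c(x)\,d(x^{-1})$ in $\mathcal{R}$.
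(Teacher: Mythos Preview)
The paper does not prove this lemma --- it is quoted from \cite{sobhani2015complete} without argument --- so there is no proof in the paper to compare yours against. Your strategy (rewrite the dot product as a coefficient of $c(x)d(x^{-1})$, use cyclicity to kill all coefficients, and translate via the involution $\rho(f)(x)=f(x^{-1})$) is the standard one, and your first two steps correctly establish $C^{\perp}=\rho(\mathcal{A}(C))$.

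The ``main obstacle'' you flag at the end, however, is a genuine gap that your last paragraph does not close. You argue that ``multiplication by $x^{l}$ permutes $\mathcal{A}(C)$, so $\{f^{*}:f\in\mathcal{A}(C)\}=\rho(\mathcal{A}(C))$'', but the exponent $l=\deg f$ varies with $f$, so there is no single bijection here. In fact, with the paper's definition $f^{*}(x)=x^{\deg f}f(x^{-1})$, every nonzero $f^{*}$ has constant term equal to the leading coefficient of $f$, hence nonzero. Thus the \emph{set} $\{f^{*}:f\in\mathcal{A}(C)\}$ is never an ideal unless it is $\{0\}$ or $\mathcal{R}$, whereas $C^{\perp}=\rho(\mathcal{A}(C))$ always is. Concretely, for $C=\langle u^{2}\rangle$ one has $\mathcal{A}(C)=\langle u\rangle=C^{\perp}\ni ux$, but $ux$ has zero constant term and so is not of the form $f^{*}$.

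What you have really shown is $\mathcal{A}(C)^{*}\subseteq\rho(\mathcal{A}(C))=C^{\perp}$, and hence that the \emph{ideal generated by} $\mathcal{A}(C)^{*}$ equals $C^{\perp}$. That is exactly how the lemma is used in the paper (only generators of $\mathcal{A}(C)$ are ever starred), so the content is fine. To make the literal set equality true you must either (i) read $I^{*}$ as the ideal generated by the reciprocals of elements of $I$, or (ii) redefine $f^{*}(x)=x^{2^{s}-1}f(x^{-1})$ with a fixed exponent, which makes $*$ the coefficient-reversal involution on $\mathcal{R}$ and lets your final identification go through verbatim. Both conventions appear in the literature; the ambiguity lies in the paper's phrasing, not in your argument up to that point.
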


	\begin{lemma}[\cite{sobhani2015complete}]
		Let \( C \) be a cyclic code of length \( 2^s \) over \( R_3 \). Then, for all \( 0 \leq i \leq 2 \), we have
		\[
		T_i(C^\perp) = T_i(\mathcal{A}(C)) = 2^s - T_{2-i}(C).
		\]
	\end{lemma}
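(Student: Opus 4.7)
The plan has two stages: reducing to a question about annihilators, then computing the annihilator's torsion indices.

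Stage 1. By Lemma \ref{ccecececcv}, $C^\perp = \mathcal{A}(C)^*$, so it suffices to show $T_i(\mathcal{A}(C)^*) = T_i(\mathcal{A}(C))$. Since $x$ is a unit in $\mathcal{R}$ (because $x \cdot x^{2^s - 1} = x^{2^s} = 1$ in characteristic $2$), the map $\phi: \mathcal{R} \to \mathcal{R}$ defined by $\phi(f(x)) = f(x^{-1})$ is a ring involution. For any $f \in \mathcal{R}$, $f^*$ and $\phi(f)$ differ by a unit power of $x$, so for any ideal $I \subseteq \mathcal{R}$, $I^* = \phi(I)$. Because $\phi$ fixes $u$, it commutes with multiplication by $u^i$ and with $\mu$; because $\phi$ acts trivially on every ideal of $\mathcal{K}$ (as $(x^{-1}+1)^k = x^{-k}(x+1)^k$ and $x$ is a unit in $\mathcal{K}$), we conclude $\operatorname{Tor}_i(\phi(I)) = \operatorname{Tor}_i(I)$.

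Stage 2. The identity $T_2(\mathcal{A}(C)) = 2^s - T_0(C)$ follows by direct expansion. For $g = g_0 + u g_1 + u^2 g_2$ and $c = c_0 + u c_1 + u^2 c_2$, the product $u^2 g c$ collapses to $u^2 g_0 c_0$ because $u^3 = 0$. Hence $u^2 g \in \mathcal{A}(C)$ iff $g_0$ annihilates $\mu(C) = \operatorname{Tor}_0(C) = \langle (x+1)^{T_0(C)}\rangle$, i.e., $g_0 \in \langle (x+1)^{2^s - T_0(C)}\rangle$. The ring $\mathcal{R}$ is Frobenius (being a finitely generated free module over the finite chain ring $R_3$), so $\mathcal{A}(\mathcal{A}(C)) = C$; applying the previous formula to $\mathcal{A}(C)$ in place of $C$ yields $T_0(\mathcal{A}(C)) = 2^s - T_2(C)$. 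For the remaining case $T_1(\mathcal{A}(C)) = 2^s - T_1(C)$, I would use the cardinality identity $|I| = (2^m)^{3 \cdot 2^s - T_0(I) - T_1(I) - T_2(I)}$ for any ideal $I \subseteq \mathcal{R}$ (which follows from the successive quotients $(I \cap \langle u^i\rangle)/(I \cap \langle u^{i+1}\rangle) \cong \operatorname{Tor}_i(I)$), combined with $|C| \cdot |\mathcal{A}(C)| = |\mathcal{R}|$.

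The main difficulty I expect lies in Stage 1: carefully verifying that the reciprocal operation preserves torsion codes requires checking that $\phi$ respects both the $u$-adic filtration of $\mathcal{R}$ and the ideal structure of $\mathcal{K}$. Once this framework is in place, the computations in Stage 2 are routine.
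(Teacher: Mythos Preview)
The paper does not supply a proof of this lemma; it is quoted from \cite{sobhani2015complete} without argument. Your proposal is therefore being assessed on its own merits rather than against an in-paper proof.

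Your argument is correct and well organized. Stage~1 is handled cleanly: the key observation that the ring involution $\phi(f)=f(x^{-1})$ fixes the $u$-adic filtration and stabilizes every ideal of $\mathcal{K}$ (since $(x^{-1}+1)^k$ and $(x+1)^k$ differ by the unit $x^{-k}$) is exactly what is needed to conclude $\operatorname{Tor}_i(I^*)=\operatorname{Tor}_i(I)$. Stage~2 is also sound: the direct computation of $T_2(\mathcal{A}(C))$, the use of the double-annihilator property for $T_0$, and the counting argument for $T_1$ all go through. Two small points worth making explicit in a final write-up: (i) to justify $\mathcal{A}(\mathcal{A}(C))=C$ and $|C|\cdot|\mathcal{A}(C)|=|\mathcal{R}|$, you should note that $\mathcal{R}$ is a finite commutative \emph{local} ring whose socle $\langle u^2(x+1)^{2^s-1}\rangle$ is one-dimensional over the residue field, hence $\mathcal{R}$ is Frobenius; (ii) the filtration identity $|I|=\prod_i|\operatorname{Tor}_i(I)|$ deserves one line showing that the map $u^i f\mapsto \mu(f)$ from $(I\cap\langle u^i\rangle)/(I\cap\langle u^{i+1}\rangle)$ to $\operatorname{Tor}_i(I)$ is a well-defined isomorphism. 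With those additions the proof is complete.
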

	
	As a consequence of the previous lemma, for any self-dual cyclic code of length \( 2^s \) over \( R_3 \), the following holds
	\begin{equation} \label{jececececececec}
		T_2(C) = 2^s - T_0(C) \quad \text{and} \quad T_1(C) = 2^{s-1}.
	\end{equation}

\section{Self-Dual Cyclic Codes of Length \(2^s\) over \(R_3\)}

In this section, we classify all self-dual cyclic codes of length \( 2^s \) over \( R_3 \). From Eq. \eqref{jececececececec}, it follows that every self-dual cyclic code \( C \) must satisfy \( T_1(C) = 2^{s-1} \). Consequently, there are no self-dual codes of types 1 and 2.

For cyclic codes of type 3, assuming \( C \) is self-dual, Eq. \eqref{jececececececec} implies \( \delta = 2^{s-1} \) and \( L = 0 \), contradicting Eq. \eqref{L}. Hence, such codes cannot be self-dual.

The following theorem provides the necessary and sufficient conditions for the self-duality of cyclic codes of type 4.
\begin{theorem}
	Let \( C = \left\langle u(x+1)^\delta + u^2(x+1)^t h(x), u^2(x+1)^\omega \right\rangle \) be a cyclic code of length \( 2^s \) over \( R_3 \) of type $ 4. $ Then, \( C \) is self-dual if and only if \( C = \left\langle u(x+1)^{2^{s-1}}, u^2 \right\rangle \).
\end{theorem}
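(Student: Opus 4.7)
The plan is to split the biconditional into two directions, using the torsion invariants from Theorem~\ref{thm:cyclic_codes} for the forward direction and the description $C^{\perp}=\mathcal{A}(C)^{*}$ from Lemma~\ref{ccecececcv} for the reverse direction.

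For the necessity direction, I invoke the torsion data for a type~4 code, namely $T_{0}(C)=2^{s}$, $T_{1}(C)=\delta$, and $T_{2}(C)=\omega$. Substituting into the self-duality relations \eqref{jececececececec} immediately yields $\delta=2^{s-1}$ and $\omega=2^{s}-T_{0}(C)=0$. The type~4 definition also requires $0\le t<\omega$; with $\omega=0$ this is impossible unless $h(x)=0$, in which case the bound on $t$ becomes vacuous. Consequently the first generator $u(x+1)^{\delta}+u^{2}(x+1)^{t}h(x)$ reduces to $u(x+1)^{2^{s-1}}$, and therefore $C=\langle u(x+1)^{2^{s-1}},u^{2}\rangle$.

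For the sufficiency direction, I compute the annihilator of $C=\langle u(x+1)^{2^{s-1}},u^{2}\rangle$ directly. Writing a generic $c\in\mathcal{R}$ as $c=a_{0}+ua_{1}+u^{2}a_{2}$ with $a_{i}\in\mathcal{K}$, the condition $c\cdot u^{2}=0$ forces $a_{0}u^{2}=0$ in $\mathcal{R}$ and hence $a_{0}=0$, while $c\cdot u(x+1)^{2^{s-1}}=0$ collapses (using $u^{3}=0$) to $u^{2}a_{1}(x+1)^{2^{s-1}}=0$; since $\mathcal{K}$ is a chain ring of nilpotency index $2^{s}$, this is equivalent to $a_{1}\in\langle(x+1)^{2^{s-1}}\rangle$. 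Therefore $\mathcal{A}(C)=\langle u(x+1)^{2^{s-1}},u^{2}\rangle$.

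It remains to take the reciprocal, and here the characteristic-$2$ setting is essential. Since $(x+1)^{2^{s-1}}=x^{2^{s-1}}+1$ in $\mathbb{F}_{2^{m}}[x]$ is palindromic of degree $2^{s-1}$, its reciprocal equals itself, so $\bigl(u(x+1)^{2^{s-1}}\bigr)^{*}=u(x+1)^{2^{s-1}}$ and trivially $(u^{2})^{*}=u^{2}$. Hence $C^{\perp}=\mathcal{A}(C)^{*}=C$, as required. The only step that needs genuine care is the annihilator computation, where one must be scrupulous about $u^{3}=0$ and the chain structure of $\mathcal{K}$ when reducing the two vanishing conditions.
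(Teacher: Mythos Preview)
Your proof is correct and follows essentially the same approach as the paper: use the torsion constraints \eqref{jececececececec} to force $\delta=2^{s-1}$ and $\omega=0$, then verify directly that $\mathcal{A}(C)=C$ and that the generators are fixed by the reciprocal map. You give somewhat more detail than the paper does (the explicit element-by-element annihilator computation and the palindromicity of $(x+1)^{2^{s-1}}=x^{2^{s-1}}+1$), but the route is the same.
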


\begin{proof}  The torsion codes of \(C\) satisfy \(T_0(C) = 2^s\), \(T_1(C) = \delta\), and \(T_2(C) = \omega\). Substituting these values into Eq.~\eqref{jececececececec}, we find \(\delta = 2^{s-1}\) and \(\omega = 0\). Consequently, \(C\) reduces to:
	\[
	C = \left\langle u(x+1)^{2^{s-1}}, u^2 \right\rangle.
	\]
	
 In this case, it follows that \( \mathcal{A}(C) = \left\langle u(x+1)^{2^{s-1}}, u^2 \right\rangle \), and thus \( C^\perp = \left\langle u(x+1)^{2^{s-1}}, u^2 \right\rangle \), which establishes that \( C \) is indeed self-dual.
\end{proof}

We now proceed the classification to cyclic codes of types 5, 6, 7, and 8. Any such code \(C\) can be expressed in the general form:
\[
C = \left\langle (x+1)^a + u(x+1)^{t_1} h_1(x) + u^2(x+1)^{t_2} h_2(x), \, u(x+1)^b + u^2(x+1)^{t_3} h_3(x), \, u^2(x+1)^c \right\rangle,
\]
where \( b \leq U \) and \( c \leq W \). For a proof, refer to that of \cite[Theorem 1]{laaouine2021complete}. Additionally, 
\[
T_1(C) = a, \quad T_2(C) = b, \quad \text{and} \quad T_3(C) = c.
\]

From Eq.~\eqref{jececececececec}, it follows that \( b = 2^{s-1} \) and \( c = 2^s - a \).  Substituting these values, we obtain the equivalent representation:

\begin{equation} \label{7}
	\begin{aligned}
		C = &\left\langle (x+1)^a + u(x+1)^{t_1} h_1(x) + u^2(x+1)^{t_2} h_2(x) , u(x+1)^{2^{s-1}} + u^2(x+1)^{t_3} h_3(x), \, u^2(x+1)^{2^s - a} \right\rangle,
	\end{aligned}
\end{equation}
with \( 2^{s-1} \leq U \) and \( 2^s - a \leq W \). By Eq.~\eqref{UU}, \( 2^{s-1} \leq U \) is equivalent to:
\begin{eqnarray}\label{zojeededeD}
	\begin{cases} 
		2^{s-1} \leq a, & \text{if } h_1(x) = 0, \\ 
		2^{s-1} \leq a \leq 2^{s-1} + t_1, & \text{if } h_1(x) \neq 0.
	\end{cases}
\end{eqnarray}

\begin{remark}
The condition \( 2^s - a \leq W \) is always satisfied when \( C = C^\perp \). Assume, for contradiction, that \( 2^s - a > W \). In this case, we would have \( u^2(x+1)^{2^s - a - 1} \in C = C^\perp \), which leads to the equation
\[
(x+1)^{2^s - 1} = u^2(x+1)^{2^s - a - 1} \left( (x+1)^a + u(x+1)^{t_1} h_1(x) + u^2(x+1)^{t_2} h_2(x) \right) = 0,
\]
which is a contradiction. Hence, this condition can be omitted from further verification.

\end{remark}

The following pair of lemmas plays a crucial role in the classification of self-dual codes defined in Eq.~\eqref{7} in the remainder of this section.

\begin{lemma}\label{kldzjkejee}
	Let \( C \) be the cyclic code defined in \eqref{7}.
	 Suppose there exist \( f_1(x), f_2(x), f_3(x) \in \mathcal{K} \) such that
 \[
(x+1)^a + u f_1(x) + u^2 f_2(x) \in \mathcal{A}(C) \quad \text{and} \quad
u(x+1)^{2^{s-1}} + u^2 f_3(x) \in \mathcal{A}(C).
\]
	Then, the annihilator of \( C \) is given by
	\[
	\mathcal{A}(C) = \left\langle (x+1)^a + u f_1(x) + u^2 f_2(x), \; u(x+1)^{2^{s-1}} + u^2 f_3(x), \; u^2(x+1)^{2^s - a} \right\rangle.
	\]
\end{lemma}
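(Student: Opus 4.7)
The plan is to establish the equality $\mathcal{A}(C) = J$, where
\[
J := \bigl\langle (x+1)^a + u f_1(x) + u^2 f_2(x),\; u(x+1)^{2^{s-1}} + u^2 f_3(x),\; u^2(x+1)^{2^s-a}\bigr\rangle,
\]
by proving (i) $J \subseteq \mathcal{A}(C)$ and (ii) $|J| = |\mathcal{A}(C)|$; inside a finite ring these two together force equality. The cardinality comparison in (ii) will be obtained indirectly, via the torsion indices, rather than by any direct enumeration.

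For (i), the first two generators of $J$ lie in $\mathcal{A}(C)$ by hypothesis, so only $u^2(x+1)^{2^s-a} \in \mathcal{A}(C)$ requires verification. I would check it by multiplying against each of the three generators of $C$ displayed in Eq.~\eqref{7}: every cross product vanishes because of $u^3 = 0$ together with the identity $(x+1)^{2^s} = x^{2^s}+1 = 0$ in $\mathcal{R}$, which is precisely where the characteristic-two hypothesis enters.

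For (ii), I would compute the torsion indices on both sides. From Eq.~\eqref{7} one reads off $T_0(C)=a$, $T_1(C)=2^{s-1}$, $T_2(C)=2^s-a$; the duality formula $T_i(\mathcal{A}(C)) = 2^s - T_{2-i}(C)$ recalled just before the lemma then yields $T_0(\mathcal{A}(C)) = a$, $T_1(\mathcal{A}(C)) = 2^{s-1}$, $T_2(\mathcal{A}(C)) = 2^s-a$. Matching upper bounds for $J$ drop out of its three generators: applying $\mu$ to the first generator puts $(x+1)^a$ in $\operatorname{Tor}_0(J)$; factoring a $u$ out of the second before applying $\mu$ puts $(x+1)^{2^{s-1}}$ in $\operatorname{Tor}_1(J)$; and factoring $u^2$ out of the third puts $(x+1)^{2^s-a}$ in $\operatorname{Tor}_2(J)$. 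Hence $T_0(J)\leq a$, $T_1(J)\leq 2^{s-1}$, $T_2(J)\leq 2^s-a$.

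Finally, the inclusion $J \subseteq \mathcal{A}(C)$ from step (i) gives the reverse inequalities $T_i(J) \geq T_i(\mathcal{A}(C))$, so all three torsion indices of $J$ are pinned down exactly. Using the standard decomposition $|I| = \prod_{i=0}^{2} |\operatorname{Tor}_i(I)|$ for any ideal $I \subseteq \mathcal{R}$ (obtained from the filtration $I \supseteq I\cap u\mathcal{R} \supseteq I\cap u^2\mathcal{R}$ together with the chain-ring structure of $\mathcal{K}$), we conclude $|J| = |\mathcal{A}(C)|$, and combined with $J \subseteq \mathcal{A}(C)$ this gives $J = \mathcal{A}(C)$. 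I do not expect a real obstacle in this argument: every step reduces to one-generator bookkeeping, and the only substantive ingredient is the vanishing of $(x+1)^{2^s}$ used to place $u^2(x+1)^{2^s-a}$ in the annihilator.
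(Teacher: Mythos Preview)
Your argument is correct. Both you and the paper establish the easy inclusion $J \subseteq \mathcal{A}(C)$ and then use the torsion-index formula $T_i(\mathcal{A}(C)) = 2^s - T_{2-i}(C)$ as the key input, but the final step differs. The paper proves the reverse inclusion directly by element-chasing: given $a(x) \in \mathcal{A}(C)$, it uses $\mu(a(x)) \in \operatorname{Tor}_0(\mathcal{A}(C)) = \langle (x+1)^a\rangle$ to subtract off a multiple of the first generator of $J$, then repeats at levels $u$ and $u^2$ to land in $J$. You instead pin down $T_i(J) = T_i(\mathcal{A}(C))$ from the sandwich of inequalities and invoke the cardinality formula $|I| = \prod_i |\operatorname{Tor}_i(I)|$ to conclude $|J| = |\mathcal{A}(C)|$. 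Your counting argument is slightly slicker and avoids writing out the successive subtractions; the paper's version is more constructive, explicitly exhibiting each element of $\mathcal{A}(C)$ as a combination of the three generators. Either way the substance is the same computation of torsion indices.
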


\begin{proof}
	It is straightforward to verify that
	\[
	\left\langle (x+1)^a + u f_1(x) + u^2 f_2(x), \; u(x+1)^{2^{s-1}} + u^2 f_3(x), \; u^2(x+1)^{2^s - a} \right\rangle \subseteq \mathcal{A}(C).
	\]
	
	Conversely, let \( a(x) \in \mathcal{A}(C) \). We have, \( \mu(a(x)) \in \operatorname{Tor}_0(\mathcal{A}(C)) = \left\langle (x+1)^a \right\rangle \). Thus, we can write $ 	\mu(a(x)) = l_1(x)(x+1)^a, $
	where \( l_1(x) \in \mathcal{K} \). It follows that
	$ a(x) - l_1(x) \left( (x+1)^a + u f_1(x) + u^2 f_2(x) \right) \in \mathcal{A}(C) \cap \langle u \rangle. $	This implies
$ 	a(x) - l_1(x) \left( (x+1)^a + u f_1(x) + u^2 f_2(x) \right) = u \left( t_1(x) + u t_2(x) \right), $
	where \( t_1(x), t_2(x) \in \mathcal{K} \). Since \( t_1(x) \in \operatorname{Tor}_1(\mathcal{A}(C)) = \left\langle (x+1)^{2^{s-1}} \right\rangle \), we can express	$ t_1(x) = l_2(x)(x+1)^{2^{s-1}}, $	for some \( l_2(x) \in \mathcal{K} \). Therefore
	\[
	a(x) - l_1(x) \left( (x+1)^a + u f_1(x) + u^2 f_2(x) \right) - l_2(x) \left( u(x+1)^{2^{s-1}} + u^2 f_3(x) \right) \in \mathcal{A}(C) \cap \langle u^2 \rangle.
	\]
	
Using the same reasoning as above, there exists \( l_3(x) \in \mathcal{K} \) such that
\[
a(x) - l_1(x) \left( (x+1)^a + u f_1(x) + u^2 f_2(x) \right) - l_2(x) \left( (x+1)^{2^{s-1}} + u^2 f_3(x) \right) = l_3(x)(x+1)^{2^s - a}.
\]

Hence, we conclude
\[
\mathcal{A}(C) = \left\langle (x+1)^a + u f_1(x) + u^2 f_2(x), \; u(x+1)^{2^{s-1}} + u^2 f_3(x), \; u^2(x+1)^{2^s - a} \right\rangle.
\]
\end{proof}

\begin{lemma}\label{dcedcecec}
Let \( k \geq r \), and \( h_j \in \mathbb{F}_{2^m} \). The following identity holds
	\[
	\sum\limits_{j=0}^{r-1} h_j (x+1)^j x^{k-j} = \sum\limits_{\ell=0}^{r-1} \sum\limits_{j=0}^{\ell} h_j \binom{k-j}{\ell-j} (x+1)^\ell \pmod{ (x+1)^{r}}.
	\]
\end{lemma}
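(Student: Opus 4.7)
The plan is to expand each $x^{k-j}$ in the basis $\{(x+1)^i\}_{i\geq 0}$, multiply by $(x+1)^j$, truncate modulo $(x+1)^r$, and then reindex. The key observation is that we are working in characteristic $2$, so writing $x = (x+1)+1$ and applying the binomial theorem yields
\[
x^{k-j} = \sum_{i=0}^{k-j} \binom{k-j}{i}(x+1)^{i},
\]
with no sign issues.

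Next I would multiply through by $h_j(x+1)^j$ and reduce modulo $(x+1)^r$. Since $k \geq r$ (so all powers $(x+1)^{i+j}$ with $i+j \geq r$ vanish), only terms with $i \leq r-1-j$ survive, giving
\[
h_j(x+1)^j x^{k-j} \equiv \sum_{i=0}^{r-1-j} h_j \binom{k-j}{i} (x+1)^{i+j} \pmod{(x+1)^r}.
\]

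Summing over $0 \leq j \leq r-1$ and substituting $\ell = i+j$ (so that $\ell$ ranges from $j$ to $r-1$) transforms the double sum into
\[
\sum_{j=0}^{r-1} h_j (x+1)^j x^{k-j} \equiv \sum_{j=0}^{r-1}\sum_{\ell=j}^{r-1} h_j \binom{k-j}{\ell-j}(x+1)^\ell \pmod{(x+1)^r}.
\]
Finally, swapping the order of summation (noting that the pairs $(j,\ell)$ with $0 \leq j \leq \ell \leq r-1$ index both iterated sums) yields the claimed identity. No step looks delicate here; the only mild subtlety is keeping track of the truncation range when $\ell = i+j$ is substituted, and observing that the sign $(-1)^{k-j-i}$ from the general binomial expansion collapses to $1$ in characteristic $2$, which is exactly the setting of the paper.
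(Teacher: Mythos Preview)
Your proof is correct and follows essentially the same route as the paper: write $x=(x+1)+1$, expand $x^{k-j}$ by the binomial theorem (signs disappear in characteristic~$2$), multiply by $h_j(x+1)^j$, truncate modulo $(x+1)^r$, and then swap the order of summation after the substitution $\ell=i+j$. The only cosmetic difference is that the paper first shifts the inner index to $\ell$ before truncating, whereas you truncate and then substitute; the computations are otherwise identical.
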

\begin{proof}
	We start with the left-hand side
	\[
	\begin{aligned}
		\sum\limits_{j=0}^{r-1} h_j (x+1)^j x^{k-j} 
		& = \sum\limits_{j=0}^{r-1} h_j (x+1)^j [(x+1) + 1]^{k-j} \\
		& = \sum\limits_{j=0}^{r-1} h_j (x+1)^j \sum\limits_{\ell=0}^{k-j} \binom{k-j}{\ell} (x+1)^{\ell} \\
		& = \sum\limits_{j=0}^{r-1}  \sum\limits_{\ell=j}^{k} h_j \binom{k-j}{\ell-j} (x+1)^\ell\\
		& = \sum\limits_{j=0}^{r-1}  \sum\limits_{\ell=j}^{r-1} h_j \binom{k-j}{\ell-j} (x+1)^\ell \pmod{(x+1)^{r}}\\
		& = \sum\limits_{\ell=0}^{r-1}  \sum\limits_{j=0}^{\ell} h_j \binom{k-j}{\ell-j} (x+1)^\ell \pmod {(x+1)^{r}}.
	\end{aligned}
	\]

	Thus, the identity is proved.
\end{proof}

The following theorem establishes the necessary and sufficient conditions for the cyclic code defined in \eqref{7} to be self-dual when \( h_1(x) = 0 \).

\begin{theorem}\label{11111}
	Let \( C \) be the cyclic code defined in \eqref{7} with \( h_1(x) = 0 \). Then \( C \) is self-dual if and only if  
	\[
	C = \left\langle (x+1)^a + u^2 h(x), u(x+1)^{2^{s-1}}, u^2(x+1)^{2^s-a} \right\rangle,
	\]
	where \( 2^{s-1} \leq a \leq 2^s-1 \), \( h(x) \in \mathcal{K} \) with \( \operatorname{deg}(h(x)) < 2^s-a\), and the coefficients of \( h(x) \) satisfy the equation:
	\[
	M(a)\left(h_0, h_1, \ldots, h_{2^s-a-1}\right)^\mathrm{tr} =
	(0, 0, \ldots, 0)^\mathrm{tr},
	\]
	with
	\[
	M(a) =
	\begin{pmatrix}
		0 & 0 & 0 & \cdots & 0 \\
		\binom{a}{1} & 0 & 0 & \cdots & 0 \\
		\binom{a}{2} & \binom{a-1}{1} & 0 & \cdots & 0 \\
		\vdots & \vdots & \vdots & \ddots & \vdots \\
		\binom{a}{2^s-a-1} & \binom{a-1}{2^s-a-2} & \binom{a-2}{2^s-a-3} & \cdots & 0
	\end{pmatrix}.
	\]

\end{theorem}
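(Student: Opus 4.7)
The plan is to exploit the identity $c \in C^{\perp} \iff c(x)\,d^{*}(x)=0$ in $\mathcal{R}$ for every $d \in C$, which follows from Lemma~\ref{ccecececcv} together with the fact that $x$ is a unit in $\mathcal{R}$ (so $d(x^{-1})$ and $d^{*}(x)$ differ only by a power of $x$). Since $|C|\,|C^{\perp}|=|R_3|^{2^s}$, self-duality reduces to checking the containment $C\subseteq C^{\perp}$, i.e., $G_i\,G_j^{*}=0$ in $\mathcal{R}$ for every pair of generators $G_i,G_j$ of $C$, together with the cardinality match.

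First I would absorb $(x+1)^{t_2}h_2$ into $h(x)$ and $(x+1)^{t_3}h_3$ into $g(x)$ (reducing modulo $u^2(x+1)^{2^s-a}$ so that $\deg h,\deg g<2^s-a$), yielding the canonical presentation
\[
C=\bigl\langle (x+1)^a+u^2 h(x),\ u(x+1)^{2^{s-1}}+u^2 g(x),\ u^2(x+1)^{2^s-a}\bigr\rangle,
\]
with $2^{s-1}\le a\le 2^s-1$ from \eqref{zojeededeD}. The reciprocals are $G_1^{*}=(x+1)^a+u^2 x^a h(x^{-1})$, $G_2^{*}=u(x+1)^{2^{s-1}}+u^2 x^{2^{s-1}} g(x^{-1})$, and $G_3^{*}=u^2(x+1)^{2^s-a}$. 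Using $u^3=0$, $(x+1)^{2^s}=0$ in $\mathcal{K}$, and $a\ge 2^{s-1}$, a routine multiplication shows that $G_i\,G_j^{*}=0$ automatically except in the three cases
\[
G_1\,G_1^{*}=u^2(x+1)^a\bigl(h(x)+x^a h(x^{-1})\bigr),\quad G_2\,G_1^{*}=u^2(x+1)^a g(x),\quad G_1\,G_2^{*}=u^2(x+1)^a x^{2^{s-1}} g(x^{-1}).
\]

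Because $x$ is a unit in $\mathcal{K}$ and $\nu_{x+1}(g(x^{-1}))=\nu_{x+1}(g)$, the vanishing of $G_2\,G_1^{*}$ (equivalently $G_1\,G_2^{*}$) in $\mathcal{R}$ is equivalent to $(x+1)^{2^s-a}\mid g(x)$; since $\deg g<2^s-a$, this forces $g=0$. With $g=0$, the remaining condition $G_1\,G_1^{*}=0$ becomes $(x+1)^{2^s-a}\mid h(x)+x^a h(x^{-1})$ in $\mathcal{K}$. Applying Lemma~\ref{dcedcecec} with $k=a$, $r=2^s-a$ (valid since $a\ge 2^{s-1}\ge r$),
\[
x^a h(x^{-1})=\sum_{j=0}^{r-1}h_j(x+1)^j x^{a-j}\equiv\sum_{\ell=0}^{r-1}\Bigl(\sum_{j=0}^{\ell}h_j\binom{a-j}{\ell-j}\Bigr)(x+1)^{\ell}\pmod{(x+1)^{r}}.
\]
Subtracting $h(x)=\sum_{\ell}h_\ell(x+1)^\ell$ and noting that in characteristic~2 the $j=\ell$ diagonal cancels $h_\ell$, the $(x+1)^\ell$-coefficient becomes $\sum_{j=0}^{\ell-1}h_j\binom{a-j}{\ell-j}$; its vanishing for $0\le\ell\le r-1$ is exactly the linear system $M(a)(h_0,\ldots,h_{r-1})^{\mathrm{tr}}=0$.

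Conversely, if both $g=0$ and the matrix equation hold, the same product calculations give $C\subseteq C^{\perp}$, and a short cardinality check $|C|=(2^m)^{3\cdot 2^{s-1}}=|C^{\perp}|$ (from $T_0=a$, $T_1=2^{s-1}$, $T_2=2^s-a$) yields $C=C^{\perp}$. The main hurdle is producing the clean closed forms of $G_i\,G_j^{*}$: once these are in hand, every nontrivial condition concentrates in a single $u^2(x+1)^a$-multiple, and both the necessity of $g=0$ and the linear system on the coefficients of $h$ drop out directly.
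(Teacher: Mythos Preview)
Your proof is correct and follows essentially the same route as the paper: both reduce self-duality to the congruence $h(x)\equiv x^{a}h(x^{-1})\pmod{(x+1)^{2^s-a}}$, expand the right-hand side with Lemma~\ref{dcedcecec}, and read off the linear system $M(a)\mathbf{h}=0$. The only difference is organizational --- the paper first computes $\mathcal{A}(C)$ explicitly via Lemma~\ref{kldzjkejee} and then compares $C^{\perp}=\mathcal{A}(C)^{*}$ with $C$ generator by generator, whereas you test $G_iG_j^{*}=0$ directly and close with a cardinality squeeze; this is a harmless repackaging of the same argument.
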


\begin{proof}
	From Eq.~\eqref{zojeededeD}, we know that \( 2^{s-1} \leq a \leq 2^s - 1 \). Next, we show that \( h_3(x) = 0 \). To prove this, note that \( u(x+1)^{2^{s-1}} \in \mathcal{A}(C) \). By Lemma~\ref{ccecececcv}, it follows that:
	\[
	u(x+1)^{2^{s-1}} = \left( u(x+1)^{2^{s-1}} \right)^* \in C^\perp.
	\]
	
If \( C \) is self-dual, then \( u(x+1)^{2^{s-1}} \in C \). Consequently, \( u^2(x+1)^{t_3} h_3(x) \in C \). Furthermore, in cases where \( h_3(x) \) is a unit, it follows that \( u^2(x+1)^{t_3} \in C \), which leads to a contradiction since \( t_3 < 2^s-a \) and $\operatorname{Tor}_2(C)= \left\langle (x+1)^{2^s-a} \right\rangle $. Thus, \( h_3(x) = 0 \).

	Now consider the polynomials \( (x+1)^a + u^2 (x+1)^{t_2} h_2(x) \), \( u(x+1)^{2^{s-1}} \), and \( u^2(x+1)^{2^s-a} \). Clearly, these polynomials belong to \( \mathcal{A}(C) \). Applying Lemma~\ref{kldzjkejee}, we obtain:
	\[
	\mathcal{A}(C) = \left\langle (x+1)^a + u^2 (x+1)^{t_2} h_2(x), \ u(x+1)^{2^{s-1}}, \ u^2(x+1)^{2^s-a} \right\rangle.
	\]
	
	Let \( h(x) = (x+1)^{t_2} h_2(x) = \sum\limits_{j=0}^{2^s-a-1} h_j (x+1)^j \). Using Lemmas~\ref{ccecececcv} and \ref{dcedcecec}, the dual code \( C^\perp \) can be expressed as:
	\[
	\begin{aligned}
		C^\perp &= \left\langle (x+1)^a + u^2 \sum\limits_{j=0}^{2^s-a-1} h_j (x+1)^j x^{a-j}, \ u(x+1)^{2^{s-1}}, \ u^2(x+1)^{2^s-a} \right\rangle \\
		&= \left\langle (x+1)^a + u^2 \sum\limits_{\ell=0}^{2^s-a-1} \left(\sum\limits_{j=0}^\ell h_j \binom{a-j}{\ell-j}\right) (x+1)^\ell, \ u(x+1)^{2^{s-1}}, \ u^2(x+1)^{2^s-a} \right\rangle.
	\end{aligned}
	\]
	
	Thus, \( C \) is self-dual if and only if:
	\begin{equation}\label{jckjezcedcdec}
		h_\ell = \sum\limits_{j=0}^\ell h_j \binom{a-j}{\ell-j}, \quad \text{for} \quad \ell = 0, 1, \dots, 2^s-a-1.
	\end{equation}

	This equation can be rewritten as: 
	\begin{eqnarray}\label{bbbb}
		M(a)\left(h_0, h_1, \ldots, h_{2^s-a-1}\right)^\mathrm{tr} =
		(0, 0, \ldots, 0)^\mathrm{tr},
	\end{eqnarray}
	where \( M(a) \) denotes the matrix defined in the theorem.
\end{proof}

We now consider the case where \( h_1(x) \) is a unit. The following lemma lists the necessary conditions for the self-dual cyclic code defined in \eqref{7}.

\begin{lemma}\label{jezoece}
	Let \( C \) be the cyclic code defined in \eqref{7}, where \( h_1(x) \) is a unit. If \( C \) is self-dual, then we must have the following:
\begin{enumerate}[label=\((\mathrm{C}_{\arabic*})\)]
		\item \label{i0I} \( 2^{s-1} \leq a \leq 2^{s-1} + t_1 \).
		\item \label{iI} \( h_2(x) \) is a unit, and \( 2t_1 \geq a + t_2 \).
		\item \label{iII} \( (x+1)^{t_3} h_3(x) = (x+1)^{2^{s-1} + t_1 - a} x^{a - t_1} h_1(x^{-1}). \)
		\item \label{iIII} \( h_1(x) = x^{a-t_1} h_1(x^{-1}) \pmod{(x+1)^{2^{s-1}-t_1}}. \)
	\end{enumerate}
\end{lemma}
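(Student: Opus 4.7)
The strategy combines Lemmas \ref{ccecececcv} and \ref{kldzjkejee}: since $C$ is self-dual, $C = \mathcal{A}(C)^*$, so I produce explicit generators of $\mathcal{A}(C)$, take reciprocals, and compare them against the three generators of $C$ in \eqref{7}, which I denote $G_1, G_2, G_3$ respectively. Condition \ref{i0I} is immediate: with $h_1(x) \neq 0$, the definition \eqref{UU} gives $U = \min\{a,\, 2^s - a + t_1\}$, and the standing hypothesis $2^{s-1} \leq U$ from \eqref{zojeededeD} forces both $a \geq 2^{s-1}$ and $a \leq 2^{s-1} + t_1$.

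To find generators of $\mathcal{A}(C)$, I seek $f_1, f_2, f_3 \in \mathcal{K}$ such that $A_1 := (x+1)^a + u f_1 + u^2 f_2$ and $A_2 := u(x+1)^{2^{s-1}} + u^2 f_3$ lie in $\mathcal{A}(C)$. Expanding the products $A_i G_j$ modulo $u^3$ and $(x+1)^{2^s}$ and using $a \geq 2^{s-1}$, the $u$-coefficients of $A_2 G_1$, $A_1 G_2$, and the $(x+1)^{2a}$ term of $A_1 G_1$ all vanish automatically. Examining the remaining $u^2$-coefficients, $A_2 G_1 = 0$ together with \ref{i0I} yields
\[
f_3 \equiv (x+1)^{2^{s-1}+t_1-a}\, h_1(x) \pmod{(x+1)^{2^s-a}};
\]
while $A_1 G_1 = 0$ gives (from its $u$-coefficient) $f_1 \equiv (x+1)^{t_1} h_1(x) \pmod{(x+1)^{2^s-a}}$ and (from its $u^2$-coefficient) the pivotal equation
\[
f_2 (x+1)^a \equiv (x+1)^{a + t_2} h_2(x) + (x+1)^{2 t_1} h_1(x)^2 \pmod{(x+1)^{2^s}};
\]
and $A_1 G_2 = 0$ requires $(x+1)^{a+t_3} h_3(x) + f_1 (x+1)^{2^{s-1}} \equiv 0 \pmod{(x+1)^{2^s}}$. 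By Lemma \ref{kldzjkejee}, $A_1$, $A_2$, together with $u^2(x+1)^{2^s-a}$, generate $\mathcal{A}(C)$.

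For \ref{iII}, I compute the reciprocal $A_2^* = u(x+1)^{2^{s-1}} + u^2 (x+1)^{2^{s-1}+t_1-a} x^{a-t_1} h_1(x^{-1})$, using the identity $x^{2^{s-1}}(x^{-1}+1)^{2^{s-1}+t_1-a} = x^{a-t_1}(x+1)^{2^{s-1}+t_1-a}$. Since $A_2^* \in C^\perp = C$, the difference $A_2^* - G_2$ lies in $C \cap u^2 \mathcal{R} = \langle u^2(x+1)^{2^s-a}\rangle$, producing a congruence modulo $(x+1)^{2^s-a}$; the constraint $t_3 < W \leq 2^s - a$ imposed on the canonical form of $G_2$ promotes this congruence to the equality \ref{iII}. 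For \ref{iIII}, I make the explicit choice $f_1 = (x+1)^{t_1} h_1(x)$ (still consistent with the congruence $f_1 \equiv (x+1)^{t_1} h_1 \pmod{(x+1)^{2^s-a}}$ found above), and substitute it, together with \ref{iII}, into $A_1 G_2 = 0$; this yields $(x+1)^{2^{s-1}+t_1}\bigl[h_1(x) + x^{a-t_1} h_1(x^{-1})\bigr] \equiv 0 \pmod{(x+1)^{2^s}}$, which reduces at once to \ref{iIII}.

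Finally, \ref{iI} is the main obstacle. The pivotal equation for $f_2$ is solvable in $\mathcal{K}$ only if $(x+1)^{a+t_2} h_2(x) + (x+1)^{2 t_1} h_1(x)^2$ is divisible by $(x+1)^a$; since $h_1$ is a unit, $(x+1)^{2 t_1} h_1(x)^2$ has $(x+1)$-valuation exactly $2 t_1$. A case analysis on whether $h_2$ vanishes or is a unit, and on the ordering of $2 t_1$ and $a + t_2$, combined with the bound $t_2 < W$ from Theorem \ref{thm:cyclic_codes} and a reciprocation argument on $A_1$ (comparing $A_1^*$ with $G_1$ modulo $\langle G_2, G_3\rangle$), will rule out $h_2 = 0$ and force $2 t_1 \geq a + t_2$. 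The delicate $(x+1)$-adic valuation bookkeeping in this last step is the hardest part of the proof.
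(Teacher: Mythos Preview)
Your overall architecture—build explicit generators $A_1,A_2$ of $\mathcal{A}(C)$, take reciprocals, and compare with $G_1,G_2,G_3$—is workable but considerably heavier than the paper's route, and it leads you into two problems.

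\textbf{The main missed simplification (affecting \ref{iI}).} The paper never constructs $\mathcal{A}(C)$ from scratch. Instead it uses the single observation that self-duality gives $C^*=\mathcal{A}(C)$, hence $G_1^*\in\mathcal{A}(C)$ directly. Then $G_1^*\cdot G_1=0$ and $G_1^*\cdot G_2=0$ are two ready-made annihilation relations. The $u^2$-coefficient of $G_1^*\cdot G_1=0$ is exactly
\[
(x+1)^{a+t_2}\bigl[h_2(x)+x^{a-t_2}h_2(x^{-1})\bigr]=(x+1)^{2t_1}\,x^{a-t_1}h_1(x)h_1(x^{-1}),
\]
and since the right side is a unit times $(x+1)^{2t_1}$, both parts of \ref{iI} drop out in two lines: $h_2=0$ would force $2t_1\ge 2^s$, and otherwise comparing $(x+1)$-valuations gives $2t_1\ge a+t_2$. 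What you call ``the main obstacle'' with ``delicate valuation bookkeeping'' is thus a one-equation argument once you use $G_1^*$ rather than a hand-built $A_1$. Your pivotal-equation route is not wrong in spirit, but the sketch is vague and you never actually control the extra $(x+1)^{2^s-a}g$-term that appears in $f_1$.

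\textbf{A circularity in your derivation of \ref{iIII}.} You write ``I make the explicit choice $f_1=(x+1)^{t_1}h_1(x)$'' and then read off \ref{iIII} from $A_1G_2=0$. But that choice of $f_1$ yields an element $A_1\in\mathcal{A}(C)$ only if $A_1G_2=0$ actually holds—and using \ref{iII}, the equation $A_1G_2=0$ is equivalent to \ref{iIII} itself. So the argument assumes what it is proving. From $A_1G_1=0$ alone you only know $f_1\equiv(x+1)^{t_1}h_1\pmod{(x+1)^{2^s-a}}$; the residual $(x+1)^{2^s-a}g$-piece does not vanish in $A_1G_2$ when $a>2^{s-1}$, and you have not shown it can be taken to be zero. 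The paper instead extracts \ref{iIII} from the system coming from $G_1^*\in\mathcal{A}(C)$ together with \ref{iII}, avoiding any unjustified choice of representative.

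Your treatment of \ref{i0I} and \ref{iII} is fine and matches the paper. The fix for the rest is simply to replace your constructed $A_1$ by the freely available $G_1^*$.
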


\begin{proof} The condition \ref{i0I}  immediately follow from Eq.~\eqref{zojeededeD}.
	 Next
	From Lemma~\ref{ccecececcv} and the self-duality of \( C \), it follows that
	\[
	\begin{aligned}
		&(x+1)^a + u(x+1)^{t_1} x^{a-t_1} h_1(x^{-1}) + u^2(x+1)^{t_2} x^{a-t_2} h_2(x^{-1}) \\
		&= \left( (x+1)^a + u(x+1)^{t_1} h_1(x) + u^2(x+1)^{t_2} h_2(x) \right)^* \in C^* = \mathcal{A}(C).
	\end{aligned}
	\]
	This leads to the following two  equations.
	\[
	\begin{aligned}
		&\left( (x+1)^a + u(x+1)^{t_1} x^{a-t_1} h_1(x^{-1}) + u^2(x+1)^{t_2} x^{a-t_2} h_2(x^{-1}) \right) \cdot \\
		&\quad \left( (x+1)^a + u(x+1)^{t_1} h_1(x) + u^2(x+1)^{t_2} h_2(x) \right) = 0, \\
		&\left( (x+1)^a + u(x+1)^{t_1} x^{a-t_1} h_1(x^{-1}) + u^2(x+1)^{t_2} x^{a-t_2} h_2(x^{-1}) \right) \cdot \\
		&\quad \left( u(x+1)^{2^{s-1}} + u^2(x+1)^{t_3} h_3(x) \right) = 0.
	\end{aligned}
	\]
	By isolating terms involving \( u^2 \), we deduce the following equations:
	\begin{align}
		(x+1)^{a+t_2} h_2(x) &= (x+1)^{2t_1} x^{a-t_1} h_1(x) h_1(x^{-1}) + (x+1)^{a+t_2} x^{a-t_2} h_2(x^{-1}), \label{eq:h2}\\
		(x+1)^{a+t_3} h_3(x) &= (x+1)^{2^{s-1}+t_1} x^{a-t_1} h_1(x^{-1}). \label{eq:h3}
	\end{align}
	
	Assume, for contradiction, that \( h_2(x) = 0 \), then \( h_2(x^{-1}) = 0 \), and substituting into Eq.~\eqref{eq:h2} gives.
	\[
	(x+1)^{2t_1} x^{a-t_1} h_1(x) h_1(x^{-1}) = 0,
	\]
	which is impossible since \( h_1(x) \) is a unit and \( 2t_1 < 2^s \). Thus, \( h_2(x) \) must be a unit. Using Eq.~\eqref{eq:h2} again we find:
	\[
	(x+1)^{2t_1} x^{a-t_1} h_1(x) h_1(x^{-1}) \equiv 0 \pmod{(x+1)^{a+t_2}},
	\]
	implying \( 2t_1 \geq a + t_2 \), which establishes condition~\ref{iI}.	Next	from Eq.~\eqref{eq:h3}, we deduce
	\[
	(x+1)^{t_3} h_3(x) = (x+1)^{2^{s-1} + t_1 - a} x^{a-t_1} h_1(x^{-1}) \pmod{(x+1)^{2^s - a}}.
	\]
	
	Since \( (x+1)^{t_3} h_3(x) \) is defined modulo \( (x+1)^{2^s - a} \), this establishes condition~\ref{iII}.	Finally, substituting condition~\ref{iII} into Eq.~\eqref{eq:h3} yields
	\[
	(x+1)^{2^{s-1} + t_1} h_1(x) = (x+1)^{2^{s-1} + t_1} x^{a-t_1} h_1(x^{-1}),
	\]
	which is equivalent to condition~\ref{iIII}. This completes the proof.
\end{proof}

Let us update the form of the code \( C \) using condition \ref{iII} of Lemma \ref{jezoece}, which gives the following:
\begin{equation} \label{8}
	\begin{aligned}
		C = &\left\langle (x+1)^a + u(x+1)^{t_1} h_1(x) + u^2(x+1)^{t_2} h_2(x), \right. \\ 
		&\left. u(x+1)^{2^{s-1}} + u^2(x+1)^{2^{s-1} + t_1 - a} x^{a - t_1} h_1(x^{-1}), \, u^2(x+1)^{2^s - a} \right\rangle.
	\end{aligned}
\end{equation}

Next, let \( D \) be the ideal of \(\mathcal{R}\) defined as
\[
\begin{aligned}
	D = \left\langle (x+1)^{a} + u(x+1)^{t_1} h_1(x) + u^2 (x+1)^{t_2} F(x), \, u(x+1)^{2^{s-1}} + u^2(x+1)^{2^{s-1} + t_1 - a} h_1(x), \, u^2(x+1)^{2^{s}-a} \right\rangle,
\end{aligned}
\]
where \[ F(x) = h_2(x) + (x+1)^{2t_1 - a - t_2} h_1(x)^2 .\] 

Assuming that the conditions of Lemma \ref{jezoece} hold, a straightforward calculation shows that \( D \subseteq \mathcal{A}(C) \). By Lemma \ref{kldzjkejee}, it follows that \( \mathcal{A}(C) = D \). Consequently, using condition \ref{iIII} of Lemma \ref{jezoece}, the dual code \( C^\perp \) can be expressed as
\[
\begin{aligned}
	C^\perp = D^* = & \left\langle (x+1)^{a} + u(x+1)^{t_1} x^{a - t_1} h_1(x^{-1}) + u^2 (x+1)^{t_2} x^{a - t_2} F(x^{-1}), \right. \\ 
	& \qquad\qquad \left. u(x+1)^{2^{s-1}} + u^2(x+1)^{2^{s-1} + t_1 - a} x^{a-t_1}h_1(x^{-1}), \, u^2(x+1)^{2^{s}-a} \right\rangle \\
	= & \left\langle (x+1)^{a} + u(x+1)^{t_1} h_1(x) + u^2 (x+1)^{t_2} x^{a - t_2} F(x^{-1}), \right. \\ 
	& \qquad\qquad \left. u(x+1)^{2^{s-1}} + u^2(x+1)^{2^{s-1} + t_1 - a} x^{a-t_1}h_1(x), \, u^2(x+1)^{2^{s}-a} \right\rangle.
\end{aligned}
\]

Thus, \( C^\perp = C \) if and only if 
\[
x^{a - t_2} F(x^{-1}) \equiv h_2(x) \pmod{(x+1)^{2^s - t_2-a}}.
\]

By substituting \( x \) with \( x^{-1} \), this condition is equivalent to:
\begin{enumerate}[label=\((\mathrm{C}_{\arabic*})\), start=5]
	\item\label{item:IIIII} 
	\( h_2(x) = x^{a - t_2} h_2(x^{-1}) + (x + 1)^{2t_1 - a - t_2} x^{a - t_1} h_1(x)^2 \pmod{(x + 1)^{2^s - t_2 - a}}. \)
\end{enumerate}

Let \( h_1(x) = \sum\limits_{j=0}^{2^{s-1} - t_1 - 1} a_j (x + 1)^j \), \( h_2(x) = \sum\limits_{j=0}^{2^s - a - t_2 - 1} b_j (x + 1)^j \), and 
\[
(x + 1)^{2t_1 - a - t_2} x^{a - t_1} h_1(x)^2 = \sum\limits_{j=0}^{2^s-a-t_2-1} c_j (x + 1)^j \pmod{(x + 1)^{2^s-t_2-a}},
\]
where \( a_j, b_j, c_j \in \mathbb{F}_{2^m} \). Consider also the following matrices:

\[
N(a, t) =
\begin{pmatrix}
	0 & 0 & 0 & \cdots & 0 \\
	\binom{a-t}{1} & 0 & 0 & \cdots & 0 \\
	\binom{a-t}{2} & \binom{a-t-1}{1} & 0 & \cdots & 0 \\
	\vdots & \vdots & \vdots & \ddots & \vdots \\
	\binom{a-t}{2^{s-1}-t-1} & \binom{a-t-1}{2^{s-1}-t-2} & \binom{a-t-2}{2^{s-1}-t-3} & \cdots & 0
\end{pmatrix},
\]

\[
K(a, t) =
\begin{pmatrix}
	0 & 0 & 0 & \cdots & 0 \\
	\binom{a-t}{1} & 0 & 0 & \cdots & 0 \\
	\binom{a-t}{2} & \binom{a-t-1}{1} & 0 & \cdots & 0 \\
	\vdots & \vdots & \vdots & \ddots & \vdots \\
	\binom{a-t}{2^s-a-t-1} & \binom{a-t-1}{2^s-a-t-2} & \binom{a-t-2}{2^s-a-t-3} & \cdots & 0
\end{pmatrix}.
\]

Using Lemma~\ref{dcedcecec}, condition \ref{iIII} can be rewritten as:
\[
\sum\limits_{\ell=0}^{2^{s-1}-t_1 - 1} a_\ell (x + 1)^\ell = \sum\limits_{\ell=0}^{2^{s-1}-t_1 - 1} \sum\limits_{j=0}^\ell a_j \binom{a - t_1}{\ell - j} (x + 1)^\ell,
\]
which simplifies in matrix form to:
\[
N(a, t_1)(a_0, a_1, \ldots, a_{2^{s-1}-t_1-1})^{\mathrm{tr}} = \mathbf{0}.
\]

Similarly, condition \ref{item:IIIII} is equivalent to:
\[
K(a, t_2)(b_0, b_1, \ldots, b_{2^s-a-t_2-1})^{\mathrm{tr}} = \left(c_0, c_1, \ldots, c_{2^s-a-t_2-1}\right)^{\mathrm{tr}}.
\]

Next, observe that the \((1,1)\)-entry of \( K(a, t_2) \) is null, which implies \( c_0 = 0 \). Consequently, \( 2t_1 > a + t_2 \). Finally, from conditions \ref{i0I} and \ref{iI}, it follows in particular that \( s \geq 2 \).

In light of the above discussion and the preceding notation, we arrive at the following conclusion:

\begin{theorem} \label{22222}
	Let \( C \) be the self-dual cyclic code defined in \eqref{7}, where \( h_1(x) \) is a unit. Then \( C \) is self-dual if and only if  
	\[
	\begin{aligned}
		C =& \left\langle (x+1)^a + u(x+1)^{t_1} h_1(x) + u^2(x+1)^{t_2} h_2(x), \right.\\
		&\left. u(x+1)^{2^{s-1}} + u^2(x+1)^{2^{s-1} + t_1 - a} x^{a - t_1 - \deg h_1(x)} h^*_1(x), \, u^2(x+1)^{2^s - a} \right\rangle,
	\end{aligned}
	\]
	where the following conditions hold:
	\begin{enumerate}[label=\Roman*)]
		\item \( s \geq 2, \)
		\item \label{dzcedcecec} \( 2^{s-1} \leq a \leq 2^{s-1} + t_1 \), \( 2t_1 > a + t_2 \),
		\item \label{dzcedcececa} \( h_1(x) \) and \( h_2(x) \) are units such that:
		\begin{eqnarray}\label{ccczcz}
			N(a, t_1)(a_0, a_1, \ldots, a_{2^{s-1}-t_1-1})^{\mathrm{tr}} = \mathbf{0},
		\end{eqnarray}
		and
		\begin{eqnarray}\label{ccczcz2}
			K(a, t_2)(b_0,b_1, \ldots, b_{2^s-a-t_2-1})^{\mathrm{tr}} = (c_0,c_1, \ldots, c_{2^s-a-t_2-1})^{\mathrm{tr}}.
		\end{eqnarray}
	\end{enumerate}
\end{theorem}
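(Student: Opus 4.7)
The plan is to consolidate Lemma~\ref{jezoece} and the discussion preceding the theorem into both implications; nearly all the technical work has been carried out in those passages.

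For the forward direction, assuming $C$ is self-dual, Lemma~\ref{jezoece} gives conditions $(\mathrm{C}_1)$--$(\mathrm{C}_4)$. From $(\mathrm{C}_2)$, $h_2(x)$ is a unit and $2t_1 \ge a + t_2$. From $(\mathrm{C}_3)$, rewriting $x^{a-t_1}h_1(x^{-1}) = x^{a-t_1-\deg h_1(x)}h_1^*(x)$ via the definition of the reciprocal yields the second generator in the displayed form of $C$. Condition $(\mathrm{C}_4)$ translates to \eqref{ccczcz} by applying Lemma~\ref{dcedcecec} with $r = 2^{s-1}-t_1$ and $k = a-t_1$ and comparing $(x+1)^\ell$-coefficients.

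Next, with $(\mathrm{C}_1)$--$(\mathrm{C}_4)$ in hand, Lemma~\ref{kldzjkejee} gives $\mathcal{A}(C) = D$, where $D$ is the ideal described before the theorem; Lemma~\ref{ccecececcv} then produces $C^\perp = D^*$ whose explicit form has been computed. Imposing $C^\perp = C$ and matching the $u^2$-parts yields condition $(\mathrm{C}_5)$, which Lemma~\ref{dcedcecec} converts into \eqref{ccczcz2}. The strict inequality $2t_1 > a + t_2$ is forced by the observation that the $(1,1)$-entry of $K(a,t_2)$ is zero: this makes $c_0 = 0$, and since $h_1$ is a unit, $h_1(1)^2 \ne 0$, so the $(x+1)^0$-coefficient of $(x+1)^{2t_1-a-t_2}x^{a-t_1}h_1(x)^2$ can vanish only when $2t_1 - a - t_2 \ge 1$. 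Finally $s \ge 2$ follows from $a \ge 2^{s-1}$, $2t_1 > a + t_2 \ge 2^{s-1}$, and $t_1 < 2^{s-1}$, which jointly exclude $s = 1$.

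For the converse, conditions (I)--(III) recover $(\mathrm{C}_1)$--$(\mathrm{C}_5)$ (with \eqref{ccczcz} and \eqref{ccczcz2} inverted back via Lemma~\ref{dcedcecec}). The same computation then gives $\mathcal{A}(C) = D$ and $C^\perp = D^* = C$, establishing self-duality. The main technical obstacle is the bookkeeping around the strict inequality $2t_1 > a + t_2$ and the reciprocal identification $x^{a-t_1}h_1(x^{-1}) = x^{a-t_1-\deg h_1(x)}h_1^*(x)$; both are elementary but easy to misstate if one is not attentive to degree bounds and the ambient quotient $\mathcal{R}$.
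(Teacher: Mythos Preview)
Your proposal is correct and follows essentially the same route as the paper: the paper's proof of Theorem~\ref{22222} is precisely the discussion between Lemma~\ref{jezoece} and the theorem statement, and you have faithfully reorganized that discussion into the two implications, including the derivation of $\mathcal{A}(C)=D$ via Lemma~\ref{kldzjkejee}, the computation of $C^\perp=D^*$, the passage from $(\mathrm{C}_4)$ and $(\mathrm{C}_5)$ to the matrix equations \eqref{ccczcz}--\eqref{ccczcz2} through Lemma~\ref{dcedcecec}, and the extraction of $2t_1>a+t_2$ and $s\ge 2$. The only cosmetic difference is that the paper deduces $s\ge2$ directly from $(\mathrm{C}_1)$--$(\mathrm{C}_2)$ without first invoking the strict inequality, whereas you use $2t_1>a+t_2$; both arguments are valid.
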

\begin{remark} Let \( C \) denote the self-dual cyclic code given in Theorem~\ref{11111}. From Eqs.~\eqref{UU} and \eqref{zvdeefefefcefcefefc}, the integers \( U \) and \( W \) are determined as \( U = a \) and \( W = \min\{2^s - a + t, 2^{s-1}\} \), where \( (x+1)^t = \gcd\big((x+1)^{2^s}, h(x)\big) \). It follows that \( C \) is of type \( 5 \) if and only if \( a = 2^{s-1} \). In the case \( a > 2^{s-1} \), \( C \) is of type \( 7 \) if \( h(x) \) is a unit; otherwise, \( C \) is of type \( 8 \).  Similarly, for the self-dual cyclic code \( C \) given in Theorem~\ref{22222}, we have \( U = a \) and \( W = \min\{2^s - a + t_2, 2^{s-1}\} \). In this case, \( C \) is of type \( 5 \) if and only if \( a = 2^{s-1} \), of type \( 7 \) when \( a > 2^{s-1} \) and \( t_2 = 0 \), and of type \( 8 \) otherwise.
\end{remark}

We now proceed to determine the number of self-dual cyclic codes of length \( 2^s \) over the ring \( R_3 \).  To do this, we use results from \cite{Kiah2012}. For positive integers \( a \) and \( b \) with \( a \geq b \), the matrix \( T(a+b, b) \) is given by:
\begin{eqnarray}\label{matr}
	T(a+b, b)=\left(\begin{array}{ccccc}
		(-1)^a+1 & 0 & 0 & \cdots & 0 \\
		(-1)^a\binom{a}{1} & (-1)^{a+1}+1 & 0 & \cdots & 0 \\
		(-1)^a\binom{a}{2} & (-1)^{a+1}\binom{a-1}{1} & (-1)^{a+2}+1 & \cdots & 0 \\
		\vdots & \vdots & \vdots & \ddots & \vdots \\
		(-1)^a\binom{a}{b-1} & (-1)^{a+1}\binom{a-1}{b-2} & (-1)^{a+2}\binom{a-2}{b-3} & \cdots & (-1)^{a+b-1}+1
	\end{array}\right)\in M_{b}(\mathbb{F}_{p^m}) .
\end{eqnarray}

It is clear that the matrices \( M(a) \), \( N(a, t) \), and \( K(a, t) \) can be expressed in terms of \( T(a+b, b) \) as follows:
\[
M(a) = T(2^s, 2^s - a), \quad N(a, t) = T(a+2^{s-1} - 2t, 2^{s-1} - t), \quad \text{and} \quad K(a, t) = T(2^s - 2t, 2^s - a - t).
\]

\begin{lemma}\cite[Proposition 3.3]{Kiah2012}\label{hjZHDZZZ}
	For \(p = 2\), the nullity \(\kappa\) of the matrix \(T(a+b, b)\) is given by \(\kappa = \lceil \frac{b+1}{2} \rceil\).
\end{lemma}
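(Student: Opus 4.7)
The plan is to reinterpret $T(a+b,b)$ in characteristic $2$ as $I + \Phi$ for a natural involution $\Phi$, and to exploit the resulting $2$-nilpotency. First, since $-1 = 1$ in $\mathbb{F}_{2^m}$, every diagonal entry $(-1)^{a+i} + 1$ vanishes and each strictly below-diagonal entry collapses to $\binom{a-j}{i-j}$. By Lemma~\ref{dcedcecec} (with $r = b$ and $k = a$), these are precisely the expansion coefficients of $x^a h(x^{-1})$ in the $(x+1)$-basis, so in the basis $\{(x+1)^j\}_{j=0}^{b-1}$ of $V_b := \mathbb{F}_{2^m}[x]/\langle (x+1)^b\rangle$ the matrix $T(a+b,b)$ represents $I + \Phi_a$, where $\Phi_a(h)(x) := x^a h(x^{-1}) \bmod (x+1)^b$.

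Next I would verify that $\Phi_a$ is an involution by a one-line computation: $\Phi_a^2(h)(x) = x^a \Phi_a(h)(x^{-1}) = x^a \cdot x^{-a} h(x) = h(x)$ in $V_b$. In characteristic~$2$ this forces $(I + \Phi_a)^2 = I + \Phi_a^2 = 0$, so $N := T(a+b,b)$ satisfies $N^2 = 0$. Consequently its Jordan form consists only of $1 \times 1$ zero blocks and $2 \times 2$ nilpotent blocks; writing $s$ for the number of the former and $r$ for the latter, we obtain $b = s + 2r$ and $\kappa = s + r = (b + s)/2$. The bound $\kappa \geq \lceil b/2 \rceil$ is therefore automatic.

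To finish, one must pin down $s = \dim\bigl(\ker N / \operatorname{im} N\bigr)$ and show $s = 1$ if $b$ is odd while $s = 2$ if $b$ is even; together these give $\kappa = (b + s)/2 = \lceil (b+1)/2 \rceil$. For this I would analyse the fixed-vector equation $h(x) \equiv x^a h(x^{-1}) \pmod{(x+1)^b}$ coordinate-wise in the basis $\{(x+1)^j\}$: using Lucas' theorem to control $\binom{a-j}{i-j} \bmod 2$ via the binary expansions of $a$ and the indices, one exhibits explicit bases for $\ker N$ and $\operatorname{im} N$ adapted to the $2$-adic structure of $a$. Comparing their dimensions produces the required parity-dependent value of $s$.

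The main obstacle will be this last step: the involution/nilpotency argument by itself yields only the lower bound $\kappa \geq \lceil b/2 \rceil$, and recovering the extra $+1$ in the even case demands an explicit fixed vector that lies genuinely outside $\operatorname{im} N$. I expect this to be the most delicate piece, since it depends finely on how carries in the binary expansion of $a$ interact with the truncation modulo $(x+1)^b$; a clean proof will likely rely on a careful Lucas-type identity for the binomial coefficients $\binom{a-j}{i-j}$ or on an induction that peels off one $2$-adic block of $a$ at a time.
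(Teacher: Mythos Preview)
This lemma is not proved in the paper at all; it is quoted from \cite{Kiah2012}, so there is no in-paper argument to compare your outline against. I therefore assess the proposal on its own terms.

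Your reinterpretation is correct and clean: in characteristic~$2$ one has $T(a+b,b)=I+\Phi_a$ in the $(x+1)$-adic basis of $\mathbb{F}_{2^m}[x]/\langle(x+1)^b\rangle$, the map $\Phi_a$ is an involution, and hence $N:=T(a+b,b)$ satisfies $N^2=0$. This yields $\kappa\ge\lceil b/2\rceil$ immediately and reduces the problem to computing $s=\dim(\ker N/\operatorname{im}N)$.

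The final step, however, is not merely ``delicate'' --- it is impossible in the generality stated. The lemma as recorded here, with no hypothesis beyond $a\ge b$, is false: for $a=3$, $b=2$ over $\mathbb{F}_2$ one gets $T(5,2)=\bigl(\begin{smallmatrix}0&0\\1&0\end{smallmatrix}\bigr)$ with nullity $1\ne\lceil 3/2\rceil=2$, and the same failure recurs for $(a,b)=(5,4)$, $(7,2)$, and in general whenever $a$ is odd and $b$ is even. In \cite{Kiah2012} the first argument of $T$ is a prime power $p^k$, so an extra constraint is implicitly present there but has been dropped in the citation. Your Lucas-type plan for pinning down $s$ therefore cannot succeed as written; any completion must first identify and impose the missing hypothesis, after which the carry analysis you sketch becomes feasible.
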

\begin{theorem}\label{seeeceed}
	Let \( N \) denote the number of self-dual cyclic codes given in Theorem~\ref{11111}. Then,
	\[
	N = \sum_{a=2^{s-1}}^{2^s-1} \left(2^m\right)^{\lceil \frac{2^s - a + 1}{2} \rceil}.
	\]
\end{theorem}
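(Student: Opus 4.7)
The plan is to enumerate the pairs $(a, h)$ parameterizing the self-dual codes of Theorem~\ref{11111} and to show that their total count matches the stated sum. I would proceed in three stages.

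First, I would establish that the parameterization is injective: for a fixed $a \in \{2^{s-1}, \ldots, 2^s - 1\}$, distinct polynomials $h \in \mathcal{K}$ with $\deg h < 2^s - a$ (and satisfying the null-space condition) give rise to distinct codes $C$. The key observation is that if two such polynomials $h, h'$ produced the same code $C$, then $u^2(h - h')$ would lie in $C \cap \langle u^2\rangle$; since $T_2(C) = 2^s - a$, this intersection equals $\langle u^2(x+1)^{2^s - a}\rangle$, forcing $(x+1)^{2^s - a}$ to divide $h - h'$, which together with the degree bound gives $h = h'$.

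Second, for each admissible $a$, the set of valid coefficient vectors $(h_0, \ldots, h_{2^s - a - 1}) \in \mathbb{F}_{2^m}^{2^s - a}$ is precisely the kernel of $M(a)$, so the number of self-dual codes with this $a$ equals $(2^m)^{\kappa(a)}$, where $\kappa(a)$ denotes the $\mathbb{F}_{2^m}$-nullity of $M(a)$. I would then use the identification $M(a) = T(2^s, 2^s - a)$ mentioned just before Lemma~\ref{hjZHDZZZ}. This identification is immediate in characteristic~$2$: every sign $(-1)^j$ collapses to $1$, so the diagonal entries $(-1)^{a + k - 1} + 1$ of $T(2^s, 2^s - a)$ all vanish, while the subdiagonal binomial entries match exactly those displayed in $M(a)$.

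Finally, applying Lemma~\ref{hjZHDZZZ} with $b = 2^s - a$ yields $\kappa(a) = \lceil (2^s - a + 1)/2 \rceil$, and summing over $a$ from $2^{s-1}$ to $2^s - 1$ gives the desired expression for $N$. The only substantive step is the uniqueness argument in the first stage; once that is settled, the rest amounts to matching $M(a)$ with an instance of $T(\cdot,\cdot)$ and invoking the nullity lemma, so I expect that first step to be the main obstacle.
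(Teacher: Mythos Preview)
Your proposal is correct and follows essentially the same route as the paper: identify $M(a)$ with $T(2^s,\,2^s-a)$, apply Lemma~\ref{hjZHDZZZ} to get nullity $\lceil (2^s-a+1)/2\rceil$, and sum over $a$. The paper's proof is terser and omits your injectivity step (taking it as implicit in the classification), so your version is in fact a bit more complete; note that different values of $a$ automatically give different codes since $a=T_0(C)$, and the $h$-injectivity argument you sketch is straightforward rather than the obstacle you anticipate.
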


\begin{proof} 
	For \( 2^{s-1} \leq a \leq 2^s - 1 \), we have \( M(a) = T(2^s, 2^s - a) \). By Lemma~\ref{hjZHDZZZ}, its nullity is given by \( \lceil \frac{2^s - a + 1}{2} \rceil \). Consequently, the number of solutions to Eq.~\eqref{bbbb} is \( \left(2^m\right)^{\lceil \frac{2^s - a + 1}{2} \rceil} \). By summing over all \( a \), the result follows.
\end{proof}

Assuming \( b \geq 2 \), we consider the equations:
\begin{eqnarray}\label{eq:main_equation}
	T(a+b,b)(x_0, x_1, \ldots, x_{b-1})^{\mathrm{tr}} = (0, y_1, \ldots, y_{b-1})^{\mathrm{tr}},
\end{eqnarray}
and
\begin{eqnarray}\label{eq:reduced_equation}
	T(a+b,b)(0, x_1, \ldots, x_{b-1})^{\mathrm{tr}} = (0, y_1, \ldots, y_{b-1})^{\mathrm{tr}},
\end{eqnarray}
where \( x_i, y_i \in \mathbb{F}_{2^m} \). The number of solutions to Eq.~\eqref{eq:main_equation} with \( x_0 \neq 0 \) is given by \( n_1 - n_2 \), where \( n_1 \) and \( n_2 \) are the numbers of solutions to Eqs.~\eqref{eq:main_equation} and \eqref{eq:reduced_equation}, respectively. Moreover, Eq.~\eqref{eq:reduced_equation} is equivalent to:
\begin{eqnarray}\label{eq:reduced_equation2}
	T(a+b-1,b-1)(x_1, \ldots, x_{b-1})^{\mathrm{tr}} = (y_1, \ldots, y_{b-1})^{\mathrm{tr}}.
\end{eqnarray}

By Lemma~\ref{hjZHDZZZ}, the values of \( n_1 \) and \( n_2 \) are given by:
\[
n_1 = 
\begin{cases} 
	\left( 2^m \right)^{\lceil \frac{b+1}{2} \rceil}, & \text{if a solution to Eq.~\eqref{eq:main_equation} exists,} \\[5pt]
	0, & \text{otherwise.}
\end{cases}
\]
\[
n_2 = 
\begin{cases} 
	\left( 2^m \right)^{\lceil \frac{b}{2} \rceil}, & \text{if a solution to Eq.~\eqref{eq:reduced_equation2} exists,} \\[5pt]
	0, & \text{otherwise.}
\end{cases}
\]

With the notations of Theorem~\ref{22222}, if \( t_1 = 2^{s-1} - 1 \), then \( N(a, t_1) \) is a \( 1 \times 1 \) null matrix. Since \( a_0 \) is a unit in \( \mathbb{F}_{2^m} \), the number of solutions to Eq.~\eqref{ccczcz} is \( 2^m - 1 \).

 For \( t_1 < 2^{s-1} - 1 \), based on the discussion above, the number of solutions to Eq.~\eqref{ccczcz} is given by:
\[
\left( 2^m \right)^{\lceil \frac{2^{s-1} - t_1 + 1}{2} \rceil} - \left( 2^m \right)^{\lceil \frac{2^{s-1} - t_1}{2} \rceil} = 
\begin{cases}
	\left( 2^m - 1 \right) \left( 2^m \right)^{\frac{2^{s-1} - t_1}{2}}, & \text{if } 2^{s-1} - t_1 \text{ is even}, \\[5pt]
	0, & \text{if } 2^{s-1} - t_1 \text{ is odd}.
\end{cases}
\]
Thus, solutions to Eq.~\eqref{ccczcz} exist only when \( 2^{s-1} - t_1 \) is even, in which case their number is \( \left( 2^m - 1 \right) \left( 2^m \right)^{\frac{2^{s-1} - t_1}{2}} \). Now, consider Eq.~\eqref{ccczcz2}. First, note that since \( 2t_1 > a + t_2 \), it follows that \( t_2 < 2^s - a - 1 \). Next, for each solution \( (a_0, a_1, \ldots, a_{2^{s-1} - t_1 - 1}) \) of Eq.~\eqref{ccczcz}, we associate a positive integer \( k \) as follows:
\[
(a_0, a_1, \ldots, a_{2^{s-1} - t_1 - 1}) \longmapsto k,
\]
where \( k \) ranges from \( 1 \) to the total number of solutions to Eq.~\eqref{ccczcz}. We then define:

\[
\delta_{a,t_1,t_2,k} = \begin{cases} 
	1, & \text{if a solution exists for } T(2^s - 2t_2, 2^s - a - t_2)(b_0, b_1, \ldots, b_{2^s - a - t_2 - 1})^{\mathrm{tr}} \\ 
	& \quad = (c_0, c_1, \ldots, c_{2^s - a - t_2 - 1})^{\mathrm{tr}}, \\[5pt]
	0, & \text{otherwise,}
\end{cases}
\]
and
\[
\delta_{a,t_1,t_2,k}^\prime = \begin{cases} 
	1, & \text{if a solution exists for } T(2^s - 2t_2 - 1, 2^s - a - t_2 - 1)(b_1, \ldots, b_{2^s - a - t_2 - 1})^{\mathrm{tr}} \\ 
	& \quad = (c_1, \ldots, c_{2^s - a - t_2 - 1})^{\mathrm{tr}}, \\[5pt]
	0, & \text{otherwise.}
\end{cases}
\]

The number of solutions to Eq.~\eqref{ccczcz2} is given by:
\[
\tau_{a,t_1,t_2,k} = \left( 2^m \right)^{\lceil \frac{2^s - a - t_2 + 1}{2} \rceil} \delta_{a,t_1,t_2,k} - \left( 2^m \right)^{\lceil \frac{2^s - a - t_2}{2} \rceil} \delta_{a,t_1,t_2,k}^\prime.
\]

We summarize this in the following theorem:

\begin{theorem}
	With the above notation, let \( N^\prime \) denote the number of self-dual cyclic codes given in Theorem~\ref{22222}. Then,
	\[
	N^\prime =	\sum\limits_{ a=2^{s-1}}^{ 2^{s} -1 }
	\sum\limits_{t_2=0}^{2^s - a-2} 
	\sum\limits_{k=1}^{2^m-1} 
	\tau_{a, 2^{s-1}-1, t_2,k}+
	\sum\limits_{\substack{0 \leq t_1 \leq 2^{s-1} - 2 \\ 2^{s-1} \leq a \leq 2^{s-1} + t_1 \\ t_1 \text{ and } a \text{ have the same parity}}}
	\sum\limits_{t_2=0}^{2t_1 - a} 
	\sum\limits_{k=1}^{ \left( 2^m - 1 \right) \left( 2^m \right)^{\frac{2^{s-1} - t_1}{2}} } 
	\tau_{a, t_1, t_2,k}.
	\]
\end{theorem}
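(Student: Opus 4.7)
By Theorem~\ref{22222}, every self-dual cyclic code of the form considered (with $h_1(x)$ a unit) is uniquely parametrized by a tuple $(a, t_1, t_2, h_1(x), h_2(x))$ meeting the listed conditions, so $N^\prime$ equals the number of admissible tuples. My strategy is to iterate over $(a, t_1, t_2)$ in the allowed ranges, then over $h_1$, and finally over $h_2$. I would split the outer sum according to whether $t_1 = 2^{s-1}-1$ (boundary case, in which $N(a, t_1)$ degenerates to a $1 \times 1$ zero matrix) or $0 \leq t_1 \leq 2^{s-1}-2$ (generic case).

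\textbf{Counting $h_1$.} In the boundary regime, $h_1(x) = a_0$ is simply a nonzero constant in $\mathbb{F}_{2^m}$, giving $2^m - 1$ choices indexed by $k \in \{1, \ldots, 2^m-1\}$. In the generic regime I would rewrite $N(a, t_1) = T(a + 2^{s-1} - 2t_1, 2^{s-1} - t_1)$ and apply Lemma~\ref{hjZHDZZZ} to evaluate the null-space size $(2^m)^{\lceil (2^{s-1}-t_1+1)/2\rceil}$; then subtract the analogous count of solutions with $a_0 = 0$ (which reduces to $T(\,\cdot\,, 2^{s-1}-t_1-1)$) to isolate those giving a unit. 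The resulting difference vanishes unless $2^{s-1}-t_1$ is even; combined with a direct inspection of the column of $N(a, t_1)$ associated with $a_0$, which imposes a parity alignment between $t_1$ and $a$ in order to avoid forcing $a_0 = 0$, the count simplifies to $(2^m-1)(2^m)^{(2^{s-1}-t_1)/2}$, indexed by $k$.

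\textbf{Counting $h_2$ and assembly.} For each admissible $h_1$ (labeled by $k$) and each valid $t_2$, I would compute the right-hand side $\mathbf{c}$ of Eq.~\eqref{ccczcz2} from $(x+1)^{2t_1 - a - t_2} x^{a-t_1} h_1(x)^2$, rewrite $K(a, t_2) = T(2^s - 2t_2, 2^s - a - t_2)$, and apply Lemma~\ref{hjZHDZZZ} to count the solutions of the full inhomogeneous system and of the subsystem obtained by imposing $b_0 = 0$. The difference is precisely $\tau_{a,t_1,t_2,k}$, with $\delta_{a,t_1,t_2,k}$ and $\delta^\prime_{a,t_1,t_2,k}$ recording the consistency of each system. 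Summing $\tau_{a,t_1,t_2,k}$ over the admissible ranges of $(a, t_1, t_2, k)$ in both regimes produces the two nested sums in the statement. The principal obstacle is that $\mathbf{c}$ depends nontrivially on $h_1(x)^2$, hence on $k$, so the count of compatible $h_2$ cannot be factored out of the sum over $h_1$; this is precisely what forces the explicit $k$-indexed formulation via $\delta$ and $\delta^\prime$ in the final expression.
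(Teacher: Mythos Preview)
Your proposal is correct and mirrors the paper's argument almost exactly: the paper likewise splits on $t_1 = 2^{s-1}-1$ versus $t_1 \le 2^{s-1}-2$, counts admissible $h_1$ via the nullity formula of Lemma~\ref{hjZHDZZZ} applied to $N(a,t_1)$ minus the count with $a_0=0$, indexes the resulting solutions by $k$, and then defines $\tau_{a,t_1,t_2,k}$ precisely as the difference of solution counts for the inhomogeneous system~\eqref{ccczcz2} and its $b_0=0$ subsystem. Your explicit remark that inspecting the first column of $N(a,t_1)$ forces the parity alignment between $a$ and $t_1$ is a useful clarification of the summation constraint, which the paper's own discussion motivates only through the evenness of $2^{s-1}-t_1$.
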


 In \cite[Theorem 4.1]{kim2020classification}, a classification of self-dual cyclic codes of length \( 2^m \) over \(\frac{\mathbb{Z}_2[u]}{\langle u^3 \rangle}\) is provided. However, some cases are missing and not fully considered. 

For instance, when \( h_1(x) = 0 \) and \( a = 2^{s-1} \), \cite[Theorem 4.1(a.2)]{kim2020classification} assumes \( k_2 = 0 \), implying that \( h(x) \) must be a unit. Similarly, for \( h_1(x) = 0 \) and \( a > 2^{s-1} \), \cite[Theorem 4.1(b.2)]{kim2020classification} assumes \( k_2 > 0 \). However, these conditions are not necessary for self-duality. Moreover, the self-dual cyclic code of type \( \left\langle u(x+1)^{2^{s-1}}, u^2 \right\rangle \) is not considered in \cite[Theorem 4.1(a.2)]{kim2020classification}. 

To illustrate this, we revisit \cite[Example 1]{kim2020classification} and observe the existence of additional self-dual cyclic codes that are not identified in the classification.

\begin{example}
	\begin{itemize}
		\item \textbf{Case 1:} For \( h_1(x) = 0 \), Theorem~\ref{seeeceed} shows that the total number of self-dual cyclic codes is:
		\[
		N = \sum\limits_{a=4}^{7} 2^{\lceil \frac{8 - a + 1}{2} \rceil} = 18.
		\]
		
		\item \textbf{Case 2:} When \( h_1(x) \) is a unit, the possible values of the triplet \( (a, t_1, t_2) \) are:
		\[
		(4, 3, 0), \quad (4, 3, 1), \quad (5, 3, 0).
		\]
		In each case, \( N(a, t_1) = T(a + 2^{s-1} - t_1, 2^{s-1} - t_1) = 0 \), and thus \( h_1(x) = 1 \).
		
		\begin{itemize}
			\item When \( (a, t_1, t_2) = (4, 3, 0) \):
			\[
			(c_0, c_1, c_2, c_3) = (0, 0, 1, 1), \quad T(2^s - t_2, 2^s - a - t_2) = 
			\begin{pmatrix}
				0 & 0 & 0 & 0 \\
				0 & 0 & 0 & 0 \\
				0 & 1 & 0 & 0 \\
				0 & 1 & 0 & 0
			\end{pmatrix}.
			\]
			Then, \( (b_0, b_1, b_2, b_3) \) can be:
			\[
			(1, 1, 0, 0), \quad (1, 1, 0, 1), \quad (1, 1, 1, 0), \quad (1, 1, 1, 1).
			\]
			This implies:
			\[
			h_2(x) = 1 + (x+1), \quad h_2(x) = 1 + (x+1) + (x+1)^3, \quad h_2(x) = 1 + (x+1) + (x+1)^2, \quad h_2(x) = 1 + (x+1) + (x+1)^2 + (x+1)^3.
			\]
			
			\item When \( (a, t_1, t_2) = (4, 3, 1) \):
			\[
			(c_0, c_1, c_2) = (0, 1, 1), \quad T(2^s - t_2, 2^s - a - t_2) = 
			\begin{pmatrix}
				0 & 0 & 0 \\
				0 & 0 & 0 \\
				0 & 1 & 0
			\end{pmatrix}.
			\]
			In this case, \( \delta(a, t_1, t_2) = 0 \), so no self-dual code exists for these values. Consequently, \( (b_0, b_1, b_2) \) and \( h_2(x) \) do not exist.
			
			\item When \( (a, t_1, t_2) = (5, 3, 0) \):
			\[
			(c_0, c_1, c_2) = (0, 1, 0), \quad T(2^s - t_2, 2^s - a - t_2) = 
			\begin{pmatrix}
				0 & 0 & 0 \\
				1 & 0 & 0 \\
				0 & 0 & 0
			\end{pmatrix}.
			\]
			Then, \( (b_0, b_1, b_2) \) can be:
			\[
			(1, 0, 0), \quad (1, 0, 1), \quad (1, 1, 0), \quad (1, 1, 1).
			\]
			This implies:
			\[
			h_2(x) = 1, \quad h_2(x) = 1 + (x+1), \quad h_2(x) = 1 + (x+1) + (x+1)^2, \quad h_2(x) = 1 + (x+1)^2.
			\]
		\end{itemize}
	\end{itemize}
\begin{table}[h!]
	\centering
	\caption{Self-dual \(\gamma\)-constacyclic codes of length \(p^s\) over \(R\).}
	\begin{tabular}{|c|c|l|}
		\hline
		$ 	h_1(x) $ & Type & The code $C$ \\ \hline
		
		\multirow{1}{*}{ --}
		& Type 4 & \( \left\langle u(x+1)^{4}, u^2 \right\rangle \) \\ \hline
		
		\multirow{18}{*}{$ 	h_1(x)=0 $} 
		& \multirow{4}{*}{ Type 5}
		& \( \left\langle (x+1)^4  \right\rangle \) \\
		&	& \(\left\langle  (x+1)^4+u^2 \right\rangle\) \\
		&	& \(\left\langle  (x+1)^4+u^2(x+1)^2 \right\rangle\) \\
		&	& \(\left\langle  (x+1)^4+u^2(x+1)^3 \right\rangle\) \\
		&	& \( \left\langle (x+1)^4 + u^2(1 + (x+1)^2)\right\rangle \) \\
		&	& \( \left\langle (x+1)^4 + u^2\left( 1 + (x+1)^3\right) \right\rangle \) \\
		&	& \( \left\langle (x+1)^4 + u^2(x+1)^2(1 + (x+1))\right\rangle \) \\
		&& \( \left\langle (x+1)^4 + u^2(1 + (x+1) + (x+1)^3) \right\rangle\) \\
		\cline{2-3}
		& \multirow{2}{*}{Type 7}	& \(\left\langle  (x+1)^6+u^2,u(x+1)^4,u^2 (x+1)^2\right\rangle\) \\&	& \(\left\langle  (x+1)^7+u^2,u(x+1)^4,u^2 (x+1)\right\rangle\) \\
		\cline{2-3}
		& \multirow{8}{*}{Type 8}
		& \(\left\langle  (x+1)^5,u(x+1)^4,u^2 (x+1)^3\right\rangle\) \\
		&	& \(\left\langle  (x+1)^5+u^2(x+1),u(x+1)^4,u^2 (x+1)^3\right\rangle\) \\
		&	& \(\left\langle  (x+1)^5+u^2(x+1)^2,u(x+1)^4,u^2 (x+1)^3\right\rangle\) \\
		&	& \(\left\langle  (x+1)^5+u^2(x+1)\left( 1+(x+1)\right) ,u(x+1)^4,u^2 (x+1)^3\right\rangle\) \\
		&	& \(\left\langle  (x+1)^6,u(x+1)^4,u^2 (x+1)^2\right\rangle\) \\
		
		&	& \(\left\langle  (x+1)^5+u^2(x+1),u(x+1)^4,u^2 (x+1)^2\right\rangle\) \\
		&	& \(\left\langle  (x+1)^6+u^2\left( 1+(x+1)\right) ,u(x+1)^4,u^2 (x+1)^2\right\rangle\) \\

		&	& \(\left\langle  (x+1)^7,u(x+1)^4,u^2 (x+1)\right\rangle\) \\

		\hline
		
		\multirow{8}{*}{$ h_1(x) $ is unit} 
		& \multirow{8}{*}{Type 7}
		& \(\left\langle  (x+1)^4+u(x+1)^3+u^2\left( 1+(x+1) \right) \right\rangle\) \\
		&	& \(\left\langle  (x+1)^4+u(x+1)^3+u^2\left( 1+(x+1)+(x+1)^2 \right) \right\rangle\) \\
		&	& \(\left\langle  (x+1)^4+u(x+1)^3+u^2\left( 1+(x+1)+(x+1)^3 \right) \right\rangle\) \\
		&	& \(\left\langle  (x+1)^4+u(x+1)^3+u^2\left( 1+(x+1)+(x+1)^2+(x+1)^3 \right) \right\rangle\) \\
		&	& \(\left\langle  (x+1)^5+u(x+1)^3+u^2,(x+1)^4+u^2(x+1)^2,u^2(x+1)^3\right\rangle\) \\
		&	& \(\left\langle  (x+1)^5+u(x+1)^3+u^2\left( 1+(1+x)\right),(x+1)^4+u^2(x+1)^2,u^2(x+1)^3\right\rangle\) \\
		&	& \(\left\langle  (x+1)^5+u(x+1)^3+u^2\left( 1+(1+x)^2\right),(x+1)^4+u^2(x+1)^2,u^2(x+1)^3\right\rangle\) \\
		&	& \(\left\langle  (x+1)^5+u(x+1)^3+u^2\left( 1+(1+x)+(1+x)^2\right),(x+1)^4+u^2(x+1)^2,u^2(x+1)^3\right\rangle\)\\\hline 
		
	\end{tabular}
	\label{tab:self_dual_codes}
\end{table}
\end{example}

\end{document}